\newtheorem{theorem}[equation]{Theorem}
\newtheorem{corollary}[equation]{Corollary}
\newtheorem{proposition}[equation]{Proposition}
\theoremstyle{definition}
\newtheorem{remark}[equation]{Remark}
\numberwithin{equation}{section}
\newcommand{\N}{{\mathbb{N}}}
\newcommand{\B}{{\mathbb{B}}}
\begin{document}

\title{A Perron-Frobenius theory for block matrices associated to a multiplex network}






\author{Miguel Romance, Luis Sol\'a, Julio Flores, Esther Garc\'{\i}a,\\ Alejandro Garc\'{\i}a del Amo and Regino Criado\\
\\
\small Departamento de Matem\'{a}tica Aplicada, CC. e Ing. de los Materiales \\ y \small Tec. Electr\'onica, URJC \\
\small Center for Biomedical Technology (CTB), UPM}

\date{}

\maketitle

\begin{abstract}

The uniqueness of the Perron vector of a nonnegative block matrix associated to a 
multiplex network is discussed. The conclusions come from the relationships between the 
irreducibility of some nonnegative block matrix associated to a multiplex network and 
the irreducibility of the corresponding matrices to each layer as well as the 
irreducibility of the adjacency matrix of the projection network. In addition the 
computation of that Perron vector in terms of the Perron vectors of the blocks is also 
addressed. Finally we present the precise relations that allow to express the Perron 
eigenvector of the multiplex network in terms of the Perron eigenvectors of its layers.
\end{abstract}

\section{Introduction and notation}\label{sec:intro}



Very recently some relevant aspects in the theory of multiplex networks have been considered with the help of an adequate matrix or tensor representation of the networks, particularly some related to the analysis of the eigenvalues and the eigenvectors of a matrix \cite{DSCKMPGA,DSGA13,DSGA14,Sanco14,SRCFGB13,Sodo13}.

This analysis typically includes the study of the existence and uniqueness of a positive and normalized eigenvector (Perron vector), whose existence is guaranteed if the corresponding matrix is irreducible (by using the classical Perron-Frobenius theorem). As for the spectral properties, it is possible to relate the irreducibility of such a matrix with the irreducibility in each layer and the irreducibility in the corresponding matrix of the projection network \cite{DSGA13,SRCFGB13}.
 
Some of these considerations are properly addressed with the help of the Perron vector of the  block matrix which represents the multiplex structure.

The main goal of this paper is twofold. Firstly we show the uniqueness of the Perron eigenvector of the nonnegative block matrix associated to a multiplex network when the matrices of the layers and the matrix of connections between layers (or influence matrix) have some properties. Secondly we show how the Perron vector of the multiplex network relates to the lower-dimension Perron vectors of the layers and the Perron vector of the influence matrix in a precise way. Remarkably this  relationship is shown to be non linear; thus it becomes evident that the information framed in a multiplex network goes beyond a simple linear combination of the information provided by the layers.

The paper is divided in four sections. The first and second sections contain the notation employed and some background as well as a detailed description of the matrix products used along. The third section is entirely devoted to justifying the existence and uniqueness of the Perron eigenvector of the multiplex structure while the fourth section presents the precise (non linear) relations that allow to express  the Perron eigenvector of a multiplex network in terms of the Perron eigenvectors of its layers. The computations of this section are collected in a final appendix.

In the rest of the paper a {\sl multiplex network} is a set $\mathcal{M}=\{S_1, \cdots, S_m \}$   ($m\in \N$) of (directed or undirected, weighted or unweighed) complex networks $S_\ell=(X,E_\ell)$ (each of them called a {\sl layer} or {\sl state} of the multiplex network) that share the set of nodes $X=\{1,\cdots,n\}$. The adjacency matrix of each layer $S_\ell$ will be denoted by $A_\ell=(a_{ij}(\ell))\in \mathbb{R}^{n\times n}$.

In many situations, if we consider a multiplex network $\mathcal{M}$ of $m\in\N$ layers $\{S_1, \cdots, S_m \}$, we also take a {\sl influence matrix} $0\le W=(w_{ij})\in \mathbb{R}^{m\times m}$, where $w_{ij}$ measures the {\sl influence} of the layer $S_i$ in the layer $S_j$. Note that if we consider a random walker in a multiplex network, then each $w_{ij}$ can be understood as the probability of the  walker jumping from layer $S_i$ to layer $S_j$ (i.e. $W$ is the {\sl transition matrix} between the states of the multiplex network in the stochastic process given by a multiplex random walker) and therefore $W$ is a row stochastic matrix. Hence in the rest of the paper, we will always assume that the influence matrices $W$ are row stochastic.

Given a multiplex network $\mathcal{M}$ several (monoplex) networks that give valuable information about $\mathcal{M}$ can be associated to it. A first example of these (monoplex) networks is the unweighted {\sl projection network} $proj(\mathcal{M})=(X,E)$, where $X$ is the same set of nodes of the layers of $\mathcal{M}$ and
\[
E= \left(\bigcup_{\ell=1}^m E_\ell\right).
\]
It is clear that if $A=(a_{ij})\in \mathbb{R}^{n\times n}$ is the adjacency matrix of $proy(\mathcal{M})$, then
\[
a_{ij} =\left\{
\begin{tabular}{ll}
1  & \text{if $a_{ij}(\ell)=1$ for some $1\le\ell\le m$} \\
0  & \text{otherwise.}
\end{tabular}
\right.
\]

A first approach to the concept of multiplex networks could suggest that these new objects are actually (monolayer) networks with some (modular) structure in the mesoscale. It is clear that a (monolayer) network $\tilde{\mathcal{M}}$ can be associated to $\mathcal{M}$ as follows: $\tilde{\mathcal{M}}=(\tilde X,\tilde E)$, where $\tilde X$ is the disjoint union of all the nodes of $S_1,\cdots,S_m$, i.e.
\[
\tilde X=\bigcup_{1\le i\le m} X_i=\left\{(i,k)|\,\,i=1,\dots,n,\,\,k=1,\dots,m\right\}
\]
and $\tilde E$ is given by
\[
\tilde E=\left\{\left((i,k),(j,k)\right)|\,\, (i,j)\in E_k,\, 1\le k\le m\right\}
\bigcup\{\left((i,k),(i,l)\right) |\,\, i\in X,\, 1\le k\ne l\le m\}.
\]
Note that $\tilde{\mathcal{M}}$ is a (monolayer) network with $n\cdot m$ nodes whose adjacency matrix can be written as the block matrix
\[
\tilde A=\left(
\begin{array}{c|c|c|c}
A_1    & I_n & \cdots & I_n \\ \hline
I_n & A_2    & \cdots & I_n \\ \hline
\vdots & \vdots & \ddots & \vdots \\ \hline
I_n & I_n & \cdots & A_m
\end{array}
\right)\in \mathbb{R}^{nm\times nm}.
\]

It is important to remark that the behaviours of $\mathcal{M}$ and $\tilde{\mathcal{M}}$ are related but they are different since a single node of $\mathcal{M}$ belonging to several layers corresponds to $m$ different nodes in $\tilde{\mathcal{M}}$. Hence the properties and behaviours of corresponding (monolayer) network $\tilde{\mathcal{M}}$ could be understood as a kind of non-linear {\sl quotient} of the properties of the a multilayer $\mathcal{M}$.

Other examples of (monoplex) networks associated to a multiplex network $\mathcal{M}$ that give valuable information about the properties of $\mathcal{M}$ come from the study of several structural and dynamical properties of $\mathcal{M}$. In this paper we will consider the associated monoplex networks coming from the study of the eigenvector centrality of multiplex networks \cite{SRCFGB13} and from random walkers in multiplex networks \cite{DSGA13}.

If we want to extend the concept of eigenvector centrality to multiplex network, in \cite{SRCFGB13} the concept of global heterogeneous centrality of a multiplex network $\mathcal{M}$ with influence matrix $W$ is introduced from the Perron vector of the block matrix
\[
\B_0=
\left(
\begin{array}{c|c|c|c}
w_{11}A_1^t    & w_{21}A_2^t & \cdots & w_{m1}A_m^t \\ \hline
w_{12}A_1^t & w_{22}A_2^t    & \cdots & w_{m2}A_m^t \\ \hline
\vdots & \vdots & \ddots & \vdots \\ \hline
w_{1m}A_1^t & w_{2m}A_2^t & \cdots & w_{mm}A_m^t
\end{array}
\right)\in \mathbb{R}^{nm\times nm},
\]
where $A_\ell^t$ is the transpose of the adjacency matrix of layer $S_\ell$. Note that this kind of block matrix also appears if we consider some random walkers in multiplex networks. In this case, the distribution of the stationary state of the random walker is given from the Perron vector of the block matrix
\[
\B_1=
\left(
\begin{array}{c|c|c|c}
w_{11}L_1^t    & w_{21}L_2^t & \cdots & w_{m1}L_m^t \\ \hline
w_{12}L_1^t & w_{22}L_2^t    & \cdots & w_{m2}L_m^t \\ \hline
\vdots & \vdots & \ddots & \vdots \\ \hline
w_{1m}L_1^t & w_{2m}L_2 ^t & \cdots & w_{mm}L_m^t
\end{array}
\right)\in \mathbb{R}^{nm\times nm},
\]
where $L_\ell^t$ is the transpose of the row normalization of the adjacency matrix of layer $S_\ell$, i.e. if $L_\ell=(L_{ij}(\ell))$, then for each $1\le i,j\le n$
\[
L_{ij}(\ell)=\frac {a_{ij}(\ell)}{\displaystyle\sum_k a_{ik}(\ell)}.
\]
Note that each $L_\ell$ is row stochastic and therefore $L_\ell^t$ is colum stochastic.

Similarly, in \cite{DSGA13} a general framework for random walkers in multiplex networks is introduced and the  distribution of the stationary states of these random walkers are given from the Perron vector of some block matrices. In particular, if we consider random walkers with no cost in the transition between states, the distribution of the stationary state is given in terms of the  Perron vector of
\[
\B_2=
\left(
\begin{array}{c|c|c|c}
w_{11}L_1^t    & w_{21}L_1^t & \cdots & w_{m1}L_1^t \\ \hline
w_{12}L_2^t & w_{22}L_2^t    & \cdots & w_{m2}L_2^t \\ \hline
\vdots & \vdots & \ddots & \vdots \\ \hline
w_{1m}L_m^t & w_{2m}L_m ^t & \cdots & w_{mm}L_m^t
\end{array}
\right)\in \mathbb{R}^{nm\times nm},
\]
while if
we consider random walkers with cost in the transition between states, the distribution of the stationary state is given in terms of the  Perron vector of
\[
\B_3=
\left(
\begin{array}{c|c|c|c}
w_{11}L_1^t    & w_{21}I_n & \cdots & w_{m1}I_n \\ \hline
w_{12}I_n & w_{22}L_2^t    & \cdots & w_{m2}I_n \\ \hline
\vdots & \vdots & \ddots & \vdots \\ \hline
w_{1m}I_n & w_{2m}I_n & \cdots & w_{mm}L_m^t
\end{array}
\right)\in \mathbb{R}^{nm\times nm}.
\]

As we will see in section 3, it can be proven that, under some hypotheses, if the adjacency matrix of the projection network is irreducible, then these matrices are also irreducible and hence the corresponding random walkers have a unique stationary state.

%
%
%
%

%

\
This kind of arguments can be also applied to the supra-Laplacian $\mathcal{L}$ of a  multiplex (\cite{Gomez13} and \cite{Sodo13}) since we have the splitting
$$\mathcal{L} = \mathcal{L}^m + \mathcal{L}^I \, ,$$
%
where $\mathcal{L}^m$ stands for the supra-Laplacian of the
independent layers and $\mathcal{L}^I$ for the interlayer
supra-Laplacian. The first one is just the direct sum of the
intralayer Laplacians,
$$
\mathcal{L}^L =\left(
\begin{array}{c|c|c|c}
L_{1}    & 0 & \cdots & 0 \\ \hline 0    & L_{2} & \cdots & 0\\
\hline \vdots & \vdots & \ddots & \vdots \\ \hline 0    & 0 & \cdots
& L_{m}
\end{array}
\right) \, ,
$$
while the interlayer supra-Laplacian may be expressed as the
Kronecker (or tensorial) product (see section  \ref{sec:definitions}) of the interlayer Laplacian and the
$n \times n$ identity matrix $I$,
$$\mathcal{L}^I = L^I \otimes I \, .$$


 \section{Block Hadamard and Block Khatri-Rao Products}
 \label{sec:definitions}


In addition to the conventional matrix product, there are some other
matrix products which will be used throughout this paper.

\

Note that, for example, 

\

\[
\B_1=
\left(
\begin{array}{c|c|c}
w_{11}L_1^t    & \cdots & w_{m1}L_m^t \\ \hline
\vdots & \ddots & \vdots \\ \hline
w_{1m}L_1^t & \cdots & w_{mm}L_m^t
\end{array}
\right),
\]
is the Hadamard product of

\[
\left(
\begin{array}{c|c|c}
w_{11}1_n    & \cdots & w_{m1}1_n \\ \hline
\vdots & \ddots & \vdots \\ \hline
w_{1m}1_n & \cdots & w_{mm}1_n
\end{array}
\right)
\, \, \, \mbox{and} \, \, \,
\left(
\begin{array}{c|c|c}
L_1^t    & \cdots & L_m^t \\ \hline
\vdots & \ddots & \vdots \\ \hline
L_1^t & \cdots & L_m^t
\end{array}
\right),
\]
where $1_n$ the matrix $n\times n$ whose components are all equal to one, or the generalized Khatri-Rao product of

\[
\left(
\begin{array}{c|c|c}
w_{11}    & \cdots & w_{m1} \\ \hline
\vdots & \ddots & \vdots \\ \hline
w_{1m} & \cdots & w_{mm}
\end{array}
\right)
\, \, \, \mbox{and} \, \, \,
\left(
\begin{array}{c|c|c}
L_1^t    & \cdots & L_m^t \\ \hline
\vdots & \ddots & \vdots \\ \hline
L_1^t & \cdots & L_m^t
\end{array}
\right).
\]

\

This section provides a brief survey on such definitions and basic
properties without proofs. Throughout this section we refer to some standard 
references of matrix theory for details.

\

Let us consider two matrices $A$ and $B$ of  $m \times n$ and $p \times
q$ orders respectively. Let us suppose that $A = [A_{ij}]$ is partitioned with $A_{ij}$ of
order $m_i \times n_j$ ($A_{ij}$ is the $(i,j)^{th}$ block submatrix of $A$) and let $B =
[B_{kl}]$ be partitioned with $B_{kl}$ of order $p_k \times q_l$ ($B_{kl}$ is
the $(k,l)^{th}$ block submatrix of $B$). Denote by $m = \sum^t_{i=1} m_i, \, n =
\sum^d_{j=1} n_j ,  p = \sum^{u}_{k=1} p_k,$ and $q = \sum^v_{l=1}
q_l$). For simplicity, we say that $A$ and $B$ are
\textit{compatible partitioned} if $A = [A_{ij}]^t_{i,j=1}$ and $B =
[B_{ij}]^t_{i,j=1}$ are square matrices of order $m \times m$ and
partitioned, respectively, with $A_{ij}$ and $B_{ij}$ of order $m_i
\times m_j$ ($m = \sum^t_{i=1} m_i = \sum^t_{j=1} m_j$).

\

Let $A \otimes B$, $A \circ B$, $A \Theta B$, and $A * B$ be the
Kronecker, Hadamard, Tracy-Singh, and Khatri-Rao products,
respectively, of $A$ and $B$. All the definitions of the mentioned four
matrix products can be found in \cite{Liu99}, \cite{Liu02} as follows:

\

(i) \textit{Kronecker product}

\

The Kronecker product of matrices is also called the tensor product,
or direct product of matrices. This product is applicable to any two
matrices. We refer to \cite{HJ91} for a complete discussion.

\

Let $A = (a_{ij}) \in \mathbb{R}^{m \times n}$ and $B = (b_{ij}) \in \mathbb{R}^{p \times q}$. The Kronecker product of $A$ and $B$ is defined as

\[
A \otimes B = (a_{ij}B)_{ij} = \left(
\begin{array}{cccc}
a_{11}B & a_{12}B & \cdots & a_{1n}B \\
a_{21}B & a_{22}B & \cdots & a_{2n}B \\
\vdots & \vdots & \ddots & \vdots \\
a_{m1}B & a_{m2}B & \cdots & a_{mn}B
\end{array}
\right) \in \mathbb{R}^{mp\times nq}.
\]

(ii) \textit{Hadamard product}

\

The Hadamard product (elementwise multiplication), also referred to as
the \textit{Schur product}, arises in a wide variety of mathematical
applications such as covariance matrices for independent zero mean
random vectors and characteristic functions in probability theory.
The reader is referred to \cite{HJ91}, \cite{Zhang04},
\cite{Schott05} for more details about it.

\

Let $A = (a_{ij}), B = (b_{ij}) \in \mathbb{R}^{m \times n}$. The Hadamard
product of $A$ and $B$ is defined as

\[
A \circ B = (a_{ij}b_{ij})_{ij} = \left(
\begin{array}{cccc}
a_{11}b_{11} & a_{12}b_{12} & \cdots & a_{1n}b_{1n} \\
a_{21}b_{21} & a_{22}b_{22} & \cdots & a_{2n}b_{2n} \\
\vdots & \vdots & \ddots & \vdots \\
a_{m1}b_{m1} & a_{m2}b_{m2} & \cdots & a_{mn}b_{mn}
\end{array}
\right) \in \mathbb{R}^{m \times n}.
\]

\

(iii) \textit{Tracy-Singh product}
$$A \Theta B = [A_{ij} \Theta B]_{ij} = [ [ A_{ij} \otimes B_{kl}]_{kl} ]_{ij} ,$$
where $A = [A_{ij}]$, $B = [B_{kl}]$ are partitioned matrices of
order $m \times n$ and $p \times q$, respectively, $A_{ij}$ is of
order $m_i \times n_j$ , $B_{kl}$ of order $p_k \times q_l$ ,
$A_{ij} \otimes B_{kl}$ of order $m_ip_k \times n_jq_l$, $A_{ij}
\Theta B$ of order $m_ip \times n_jq$ ($m = \sum^t_{i=1} m_i, \, n =
\sum^d_{j=1} n_j, \, p = \sum^{u}_{k=1} p_k, \, q = \sum^v_{l=1}
q_l$), and $A \Theta B$ of order $mp \times nq$;

\

In order to avoid confusion we use parentheses for ordinary
matrices, whose entries are numbers, multiplied as usual, and square
brackets for cores (core matrices), whose entries are blocks.

(iv) \textit{Generalized Khatri-Rao product}
$$A \ast B = [A_{ij} \otimes B_{ij}]_{ij} \,$$
where $A = [A_{ij}]$, $B = [B_{ij}]$ are partitioned matrices of
order $m \times n$ and $p \times q$, respectively, $A_{ij}$ is of
order $m_i \times n_j$ , $B_{kl}$ of order $p_i \times q_j$ ,
$A_{ij} \otimes B_{ij}$ of order $m_ip_i \times n_jq_j$ ($m =
\sum^t_{i=1} m_i, \, n = \sum^d_{j=1} n_j, \, p = \sum^t_{i=1} p_i,
\, q = \sum^d_{j=1} q_j$ ), and $A \ast B$ of order $M{\times}N$ ($M =
\sum^t_{i=1} m_ip_i, \, N = \sum^d_{j=1} n_jq_j$).

\

Note that the generalized Khatri-Rao product is defined based on a
particular matrix partitioning, i.e., different matrix partitionings
will lead to different results. Note also that the Kronecker
product, the Hadamard product and the Khatri-Rao product
\cite{KhatriRao68}, \cite{Rao70} are all special cases of the
generalized Khatri-Rao product based on different matrix
partitionings.

\

Recall that given two matrices $A$ and $B$ with the same number of
columns, $m$, and denoting their columns by $a_i$ and $b_i$,
respectively, the (column-wise) \em Khatri-Rao \em product is defined as $A
\ast B = [a_1 \otimes b_1, a_2 \otimes b_2, \cdots, a_m \otimes
b_m]$ (we refer to \cite{LS82}, \cite{Zhang04} or \cite{LT08} for
details). Note that the Khatri-Rao product can be constructed by
selecting columns from the Kronecker product. To show this, define
the Kronecker selection matrix $S_m = I_m \ast I_m$ and verify $A
\ast B = (A \otimes B)S_m$, where $I_m$ is the identity matrix in
$\mathbb{R}^{m \times m}$.

\

Additionally, \cite{Liu99} shows that the generalized Khatri-Rao
product can be viewed as a generalized Hadamard product and the
Tracy-Singh product as a generalized Kronecker product, as follows:

\

(1) for a nonpartitioned matrix $A$, their $A \Theta B$ is $A
\otimes B$;

\

(2) for nonpartitioned matrices $A$ and $B$ of order $m \times n$,
their $A \ast B$ is $A \circ B$.

\

The Khatri-Rao and Tracy-Singh products are related by the following
relation \cite{Liu99}, \cite{Liu02} :
$$A \ast B = Z^T_1 (A \Theta B)Z_2,$$
where $A = [A_{ij}]$ is partitioned with $A_{ij}$ of order $m_i
\times n_j$ and $B = [B_{kl}]$ is partitioned with $B_{kl}$ of order
$p_k \times q_l$ ($m = \sum^t_{i=1} m_i, \, n = \sum^d_{j=1} n_j, \,
p = \sum^u_{k=1} p_k, \, q = \sum^v_{l=1} q_l)$, $Z_1$ is an $mp
\times r$ ($r = \sum^t_{i=1} m_ip_i$) matrix of zeros and ones, and
$Z_2$ is an $nq \times s$ ($s = \sum^d_{j=1} n_jq_j$) matrix of
zeros and ones such that $Z^T_1 Z_1 = I_r, Z^T_2 Z_2 = I_s$ ($I_r$
and $I_s$ are $r \times r$ and $s \times s$ identity matrices,
resp.).

\

In particular, if $m = n$ and $p = q$, then there exists a $mp
\times r$ ($r = \sum^t_{i=1} m_ip_i$) matrix $Z$ such that $Z^TZ =
I_r$ ($I_r$ is an $r \times r$ identity matrix) and $A \ast B =
Z^T(A \Theta B)Z$. Here
\[
Z= \left[
\begin{array}{ccc}
Z_1 &  & \\
 & \ddots & \\
 &  & Z_t
\end{array}
\right] \, ,
\]
where each $Z_i = [ 0_{i1}  \, \, \cdots \, \, 0_{ii-1} \, \, I_{m_i
p_i} \, \, 0_{ii+1} \, \, \cdots \, \, 0_{it}]^T$ is a real matrix
of zeros and ones, and $0_{ik}$ is a $m_ip_i \times m_ip_k$ zero
matrix for any $k \neq i$. Note also that $Z^T_i Z_i = I$ and
$$Z^T_i (A_{ij} \Theta B)Z_j = Z^T_i (A_{ij} \otimes B_{kl})_{kl}Z_j = A_{ij} \otimes B_{ij}, \, \, i,j = 1,2,\cdots,t.
$$

\

The generalized Khatri-Rao product was also used, e.g., in
\cite{XuStoicaLi06}.

\

Let $A$ and $B$ be matrices respectively expressed as $r \times t$
and $t \times u$ block matrices
\[
A = \left(
\begin{array}{cccc}
A_{11} & A_{12} & \cdots & A_{1t} \\
A_{21} & A_{22} & \cdots & A_{2t} \\
\vdots & \vdots & \ddots & \vdots \\
A_{r1} & A_{r2} & \cdots & A_{rt}
\end{array}
\right) \, \, \ \text{and}\ \, \, B = \left(
\begin{array}{cccc}
B_{11} & B_{12} & \cdots & B_{1u} \\
B_{21} & B_{22} & \cdots & B_{2u} \\
\vdots & \vdots & \ddots & \vdots \\
B_{t1} & B_{t2} & \cdots & B_{tu}
\end{array}
\right) ,
\]
where  each $A_{ij}$ ($i = 1, 2, \cdots, r$ and $j = 1, 2,
\cdots, t$) is an $m \times p$ matrix, and each $B_{ij}$ ($i=1,2,
\cdots,t$ and $j=1, 2, \cdots, u$) is a $n \times q$ matrix. In
\cite{SZ93} the \em strong Kronecker \em product is defined for two matrices $A$ and $B$ of dimensions $r \times t$ and $t \times u$ respectively as the matrix:

\[
C = \left(
\begin{array}{cccc}
C_{11} & C_{12} & \cdots & C_{1u} \\
C_{21} & C_{22} & \cdots & C_{2u} \\
\vdots & \vdots & \ddots & \vdots \\
C_{r1} & C_{r2} & \cdots & C_{ru}
\end{array}
\right) ,
\]
where each
$$C_{ij} = A_{i1} \otimes B_{1j} + A_{i2} \otimes B_{2j} + \cdots + A_{it} \otimes B_{tj} \, ,$$
is an $mn \times pq$ matrix. It is important to note that the
operation is fully determined only after the parameters $r$, $t$,
and $u$ are fixed. Generally, the partitioning of the matrices will be
clear from the context, and then we call $C$ the strong Kronecker
product of $A$ and $B$, denoted by $A \circledast B$. The strong
Kronecker product, developed in \cite{SZ93}, supportes the
analysis of certain orthogonal matrix multiplication problems. The
strong Kronecker product is considered a powerful matrix
multiplication tool for Hadamard and other orthogonal matrices from
combinatorial theory \cite{LN94}. In \cite{Pitsianis98} the
strong Kronecker product is shown to be a matrix multiplication in a
permuted space. Similarly, if $m = n$ and $p = q$, the \em strong
Hadamard \em product $A \odot B$ of $A$ and $B$ is defined in
\cite{CT08} as

\[
A \odot B = \left(
\begin{array}{cccc}
D_{11} & D_{12} & \cdots & D_{1u} \\
D_{21} & D_{22} & \cdots & D_{2u} \\
\vdots & \vdots & \ddots & \vdots \\
D_{r1} & D_{r2} & \cdots & D_{ru}
\end{array}
\right) ,
\]
where each
$$D_{ij} = A_{i1} \circ B_{1j} + A_{i2} \circ B_{2j} + \cdots + A_{it} \circ B_{tj} \, ,$$
is an $m \times p$ matrix.

\

Let $A = (A_{ij})$ and $B = (B_{ij})$ be $p \times p$ block matrices
in which each block is an $n \times n$ matrix. In \cite{HMN91} 
a \em block Hadamard \em product $ A \square B$ is defined by $A \square B :=
(A_{ij}B_{ij})$, where $A_{ij}B_{ij}$ denotes the usual matrix
product of $A_{ij}$ and $B_{ij}$.

\

There are other definitions of partitioned matrix products, see for instance  \cite{GK2012} where a generalized Kronecker product for block matrices is defined.


 \section{Irreducibility and uniqueness of Block Perron Vectors through properties of the blocks}
 \label{sec:uniquenessbis}


%


In this section we will discuss irreducibility of the block matrices that appear in our different descriptions of multiplex networks. 
Let us start by introducing some notation.

\subsection{Products of block matrices}\label{ssec:products}

In the sequel we will consider block matrices consisting of $m^2$ blocks of dimensions $n\times n$ with real nonnegative coefficients:
$$
P=\left(
\begin{array}{c|c|c|c}
P_{11}    & P_{12} & \cdots & P_{1m} \\ \hline
P_{21}    & P_{22} & \cdots & P_{2m}\\ \hline
\vdots & \vdots & \ddots & \vdots \\ \hline
P_{m1}    & P_{m2} & \cdots & P_{mm}
\end{array}
\right)
,\quad P_{ij}\in \mathbb{R}^{n\times n}.
$$
The set of all this matrices will be denoted by  $M_{nm,n}^+(\mathbb{R})$, or simply $M_{nm,n}^+$.

For two such  block matrices $P$ and $P'$, 
let us consider the strong Hadamard product defined above:
$$
(P \odot P')_{ij}=\sum_{k=1}^mP_{ik}\circ P'_{kj},
$$
where $P_{ik}\circ P'_{kj}$ denotes the Hadamard product (i.e. the componentwise product) of the blocks $P_{ik}$ and $P'_{kj}$.

For a given a sequence of $n\times n$ matrices $(A_1,\dots,A_m)$ we can consider the diagonal block matrix $\underline{A}$ matrix defined by:
$$
\underline{A}=\left(
\begin{array}{c|c|c|c}
A_{1}    & 0 & \cdots & 0 \\ \hline
0    & A_{2} & \cdots & 0\\ \hline
\vdots & \vdots & \ddots & \vdots \\ \hline
0    & 0 & \cdots & A_{m}
\end{array}
\right)
$$
We will denote by $I_n$ the $n\times n$ identity matrix, and by $1_n$ the matrix $n\times n$ whose components are all equal to one. Then the identity element of the product $\odot$ is $\underline{1}$, that is, the diagonal block matrix given by the sequence $(1_n,\dots,1_n)$.

Let us denote by $R_2$ the Boolean algebra with two elements $\{0,1\}$, on which we have two operations, namely:

\begin{table}[h]
\centering
\begin{tabular}{|c|c|c|}
\hline
$+$&$0$&$1$\\\hline
$0$&$0$&$1$\\\hline
$1$&$1$&$1$\\\hline
\end{tabular}\quad\quad\quad
\begin{tabular}{|c|c|c|}
\hline
$\,\cdot\,$&$0$&$1$\\\hline
$0$&$0$&$0$\\\hline
$1$&$0$&$1$\\\hline
\end{tabular}
\end{table}
Then, for every nonnegative matrix $P\in M_{nm,n}^+$ we may define its {\it booleanization} $\beta(P)$ as the $nm$ block matrix with coefficients in $R_2$ given by:
$$
(\beta(P)_{ij})_{kr}=\left\{\begin{array}{ll}\vspace{0.2cm}1&\mbox{si }(P_{ij})_{kr}\neq 0\\0&\mbox{si }(P_{ij})_{kr}= 0 \end{array} \right. 
$$ 
for all $i,j=1,\dots,m$, $k,r=1,\dots,n$.

Notice that the map $\beta: M_{nm,n}^+\to M_{nm,n}(R_2)$ preserves, by definition, sums, and the usual, Hadamard and strong Hadamard products; notice also that the irreduciblity of a nonnegative matrix, which is the main topic of this section, depends only on its booleanization, which can be thought of as a matrix-representation of the graph defined by the matrix.

A partial order can be defined in $M_{nm,n}(R_2)$ as $B\leq B'$ if and only if there exists $B''\in M_{nm,n}(R_2)$ such that $B+B''=B'$. It becomes  obvious that, if $P\in M_{nm,n}^+$ is irreducible, then any other matrix $P' \in M_{nm,n}^+$ satisfying $\beta(P)\leq\beta(P')$ must be irreducible as well.

Finally we note that for every block matrix $P\in M_{nm,n}^+$ a new block matrix $\widehat{P}\in M_{nm,m}^+$ can be defined by reordering the coefficients as follows:
$$
(\widehat{P}_{kr})_{ij}=(P_{ij})_{kr},\quad i,j=1,\dots,n,\quad k,r=1,\dots,m
$$
This new matrix is formed by $n^2$ blocks of dimension $m\times m$.

\subsection{Block matrices for multiplex networks}\label{ssec:blocks}

In order to model multiplex networks as they appear in nature, scientists have introduced several types of special block matrices. Generally speaking, they are all constructed upon the following data:
\begin{itemize}
\item A set of $m$ nonnegative $n\times n$ matrices $\{A_1,\dots,A_m\}$, each $A_i$ is the the adjacency matrix of the $i$-layer belonging to the multiplex network. In this context 
the matrix  $\overline{A}:=\frac{1}{m}\sum_{i=1}^mA_i$, whose associated graph is the projection network of the complex network under study, is considered. 
\item Two $nm\times nm$ nonnegative block matrices, encoding the interrelation between layers:
$$
W=\left(
\begin{array}{c|c|c|c}
W_{11}    & W_{12} & \cdots & W_{1m} \\ \hline
W_{21}    & W_{22} & \cdots & W_{2m}\\ \hline
\vdots & \vdots & \ddots & \vdots \\ \hline
W_{m1}    & W_{m2} & \cdots & W_{mm}
\end{array}
\right),\quad V=\left(
\begin{array}{c|c|c|c}
V_{11}    & V_{12} & \cdots & V_{1m} \\ \hline
V_{21}    & V_{22} & \cdots & V_{2m}\\ \hline
\vdots & \vdots & \ddots & \vdots \\ \hline
V_{m1}    & V_{m2} & \cdots & V_{mm}
\end{array}
\right)
$$
\end{itemize}
We may think of $W$ as the matrix encoding interrelations between layers (influence matrix), whereas $V$ represents interrelations between layers arising from the set of all the specific influences  that a node in a layer has over a node in another (not necessarily different) layer.

Then, upon this data, we consider the matrices:
$$
\B=\underline{A}\odot W+V\mbox{ and }\B'= W\odot\underline{A}+V,
$$
Notice that both $\B$ and $\B'$ have their own  eigenvector centrality. 

Two particular cases of the previous general scheme have a clear interest. 

\begin{itemize}
\item[1.] The term $V$ is identically zero. Then we have two block matrices $$ \B_1=\underline{A}\odot W\mbox{ and }\B'_1= W\odot\underline{A}$$ (this is the situation when modelling random walkers with no cost for the sate transition).

\item[2.] The term $W$ is equal to $\underline{1}$, so that our two block matrices are equal: $$\B_2=\underline{A}\odot \underline{1}+V=\underline{A}+V=\underline{1}\odot\underline{A}+V=\B'_2.$$
    Typically in this case one would ask $V$ to satisfy the following property:
    $$
    (\star)\,\,\,V_{ij}\mbox{ diagonal, for all }i,j
    $$
    In other words, the property $(\star)$ is satisfied whenever $\widehat{V}=\underline{B}$ being $B=(B_1,\dots,B_n)$ a sequence of $m\times m$ nonnegative matrices.
    
     Each matrix $B_j$ represent the way in which one may switch between layers, while staying at node $j$  (this is the situation when modelling random walkers with no cost for the sate transition).
\end{itemize}
In search of irreducibility conditions we will work on this general scheme; this is the  content of the next subsection.
\subsection{Irreducibility conditions}\label{ssec:irred}

As announced the rest of the section is devoted to describing irreducibility conditions of the matrices described above. 
Since we are going to discuss irreducibility through is graph-theoretical counterpart --strong connectedness-- we need to introduce first some notation. 

Given a multiplex network determined by one of the matrices $\B$ (or $\B'$) described above, we will write $i\stackrel{k}{\to}j$ when the node $i$ is linked to the node $j$ in layer $k$, i.e. when the coefficient $(A_k)_{ij}$ is different from zero. We will now consider a new monoplex network with nodes $\tilde{X}=\{(i,k)|\,\,i=1,\dots,n,\,\,k=1,\dots,m\}$ and write $(i,k)\to (j,\ell)$ when the coefficient in the position $ij$ of the block $k\ell$ of $\B$ (or $\B'$) is different from zero. In other words, we consider the weighted graph $(\tilde{X},\B)$ (or $(\tilde{X},\B')$) supported on the monolayer network $\tilde{\mathcal{M}}$.

In the case 1, we will start by analizing the case in which the projected network is strongly connected, that is, in which $\overline{A}$ is irreducible. Unfortunately, in this case, even if $W$ is positive, very simple examples show that $\B_1$ and $\B'_1$ are not necessarily irreducible. However we may state that there exists a unique Perron vector for them.  

\begin{theorem}\label{thm:irred1}
With the same notation as above, assume that $\overline{A}$ is irreducible and $W$ is positive. Then $\B_1$ and $\B'_1$ have a unique Perron vector.
\end{theorem}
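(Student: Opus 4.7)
The plan is to reduce the problem, in both cases, to an irreducible principal submatrix by isolating a unique ``basic class'' in the Frobenius normal form of $\B_1$ (resp.\ $\B'_1$); since $\B_1$ and $\B'_1$ need not themselves be irreducible, a direct Perron--Frobenius argument is not available, but the classical theory for reducible nonnegative matrices then forces a unique (up to positive scaling) nonnegative eigenvector attached to the spectral radius.

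Unpacking the strong Hadamard products gives $(\B_1)_{(i,k),(j,\ell)}=(A_k)_{ij}(W_{k\ell})_{ij}$ and $(\B'_1)_{(i,k),(j,\ell)}=(W_{k\ell})_{ij}(A_\ell)_{ij}$, and the positivity of $W$ shows that the zero pattern is driven entirely by the $A_k$'s: the $(i,k)$-row of $\B_1$ vanishes iff row $i$ of $A_k$ does, while the $(j,\ell)$-column of $\B'_1$ vanishes iff column $j$ of $A_\ell$ does. I would single out the ``live'' subsets
\[
S=\{(i,k)\in\tilde X : \text{row }i\text{ of }A_k\text{ is nonzero}\},\qquad T=\{(j,\ell)\in\tilde X : \text{column }j\text{ of }A_\ell\text{ is nonzero}\},
\]
whose complements contribute, in $\B_1$'s digraph, only sinks without self-loops, and in $\B'_1$'s digraph, only sources without self-loops---each one a singleton Frobenius class of spectral radius $0$.

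The central step is to prove that $\B_1|_S$ and $\B'_1|_T$ are irreducible. For $\B_1|_S$: given $(i,k),(j,\ell)\in S$, pick an $i_1$ with $i\stackrel{k}{\to} i_1$ (possible since $(i,k)\in S$) so that $(i,k)\to(i_1,p)$ in $\B_1$ for every layer $p$ (positivity of $W$ frees the destination layer); then use the irreducibility of $\overline A$ to take a projection-network path $i_1\stackrel{m_1}{\to} i_2\stackrel{m_2}{\to}\cdots\stackrel{m_L}{\to} i_{L+1}=j$ and lift it to the $\B_1$-walk
\[
(i,k)\longrightarrow(i_1,m_1)\longrightarrow(i_2,m_2)\longrightarrow\cdots\longrightarrow(i_L,m_L)\longrightarrow(j,\ell),
\]
choosing the destination layer at step $p$ to be $m_p$ for $p\le L$ and $\ell$ at the last step. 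Each intermediate $(i_p,m_p)$ lies in $S$ because $i_p\stackrel{m_p}{\to} i_{p+1}$ witnesses that row $i_p$ of $A_{m_p}$ is nonzero. The argument for $\B'_1|_T$ is dual: now the destination layer controls each edge, so one picks a last step in the prescribed target layer $\ell'$ with source some $j'_L\stackrel{\ell'}{\to} j'$ (available because $(j',\ell')\in T$) and realises an $\overline A$-path from $j$ to $j'_L$ in between, assigning to each intermediate vertex the very layer of the $\overline A$-edge used to arrive at it; each intermediate vertex thereby acquires an incoming edge in its own layer and hence belongs to $T$.

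Once the restrictions are known to be irreducible, each has positive spectral radius, and $\rho(\B_1)=\rho(\B_1|_S)$, $\rho(\B'_1)=\rho(\B'_1|_T)$. For $\B_1$, the eigenvalue equation at any sink $(i,k)\in\overline S$ forces $v_{(i,k)}=0$, so the full Perron vector is the extension by zero of the unique (up to positive scaling) Perron vector of the irreducible block $\B_1|_S$. For $\B'_1$, the eigenvalue equation restricted to $T$ yields the unique positive Perron vector $v|_T$ of $\B'_1|_T$, and the equation
\[
\rho(\B'_1)\,v_{(i,k)}=\sum_{(j,\ell)\in T}(\B'_1)_{(i,k),(j,\ell)}\,v_{(j,\ell)}\qquad\bigl((i,k)\in\overline T\bigr)
\]
then strictly-positively and uniquely determines $v|_{\overline T}$. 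The main obstacle I anticipate is precisely the irreducibility argument above: the bookkeeping of which layer controls which edge differs between $\B_1$ and $\B'_1$, and one must carefully shift the layer sequence of the underlying $\overline A$-path by one position to accommodate the forced-versus-free-layer dichotomy of each product.
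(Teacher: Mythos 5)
Your proposal is correct and follows essentially the same route as the paper's proof: isolate the zero rows of $\B_1$ (respectively zero columns of $\B'_1$) produced by sinks (respectively sources) in the individual layers, and prove irreducibility of the remaining principal block by lifting a path of the irreducible projection $\overline{A}$, using the positivity of $W$ to choose the layer index freely at each step. The only difference is presentational: you carry out the dual column/source argument for $\B'_1$ explicitly and track the support of the resulting eigenvector, where the paper simply declares that case analogous.
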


\begin{proof} We will present the proof of the uniqueness for $\B_1$, being the proof for $\B'_1$ analogous.

Note that the matrix $\B_1$ may have rows completely equal to zero, preventing it from being irreducible. If $W$ is strictly positive, this happens precisely if there exists a sink in the graph of one of the layers. In order to deal with this situation, we consider a permutation matrix $P$ that reorders the rows of $\B_1$ so that all the rows equal to zero appear in the first positions. Then the product $P\cdot \B_1\cdot P^t$ takes the form:
$$
P\cdot \B_1\cdot P^t=\left(\begin{array}{ccc|ccc}
0&\cdots&0&0&\cdots&0\\
\vdots&\ddots&\vdots&\vdots&&\vdots\\
0&\cdots&0&0&\cdots&0\\\hline
\star&\cdots&\star&&&\\
\vdots&&\vdots&&\mbox{\Huge{R}}&\\
\star&\cdots&\star&&&
\end{array}\right)
$$
and it suffices to show that $R$ is an irreducible matrix, because in this case the algebraic multiplicities of the spectral radius of $\B_1$ as an eigenvalue of $\B_1$ equals its multiplicity as an eigenvalue for $R$, which is equal to one.

In order to check the irreducibility of $R$ note first that, by the positivity of $W$:
\begin{equation}
\big(i\stackrel{k}{\to}j\big)\,\,\iff\,\, \big((i,k)\to (j,k)\big)\iff\,\, \big((i,k)\to (j,\ell)\big)\mbox{ for all }\ell=1,\dots,m
\label{eq:horiz}
\end{equation}
Considering then the weighted subgraph of $(\tilde{X},\B_1)$ associated to $R$, and denoting by $\tilde{X}_{R}$ its set of nodes, that is:
$$
\tilde{X}_{R}=\left\{\right(i,k)|\,\,i\stackrel{k}{\to}j\mbox{ for some }j\},
$$ 
it suffices to show that $(\tilde{X}_{R},{R})$ is strongly connected.

Let then $(i,k),(i',k')$ be two nodes of this subgraph. Since $(i,k)\in\tilde{X}_{R}$, there exist $j_1\in\{1,\dots,n\}$ such that $$(i,k)\to (j_1,\ell)\mbox{ for all }\ell.$$ 
Moreover, by hypothesis on $\overline{A}$, we know that there exist two sequences of indices $(j_1,\dots, j_r=i')$, $j_p\in\{1,\dots,n\}$, and $(k_2,\dots,k_{r})$, $k_p\in\{1,\dots,m\}$, such that:
$$
j_1\stackrel{k_2}{\to}j_2\stackrel{k_3}{\to}\dots\stackrel{k_r}{\to}j_r=i',
$$
and so $(j_p,k_{p+1})\to (j_{p+1},\ell)$ for all $\ell$. 
Summing up, we have a sequence of edges linking $(i,k)$ to $(i',k')$:
$$
(i,k)\to (j_1,k_2)\to (j_2,k_3)\to\dots\to(j_{r-1},k_r)\to(i',k').
$$
\end{proof}

\begin{remark}
Note that, denoting by $1_{nm}\in M_{nm,n}^+$ the matrix whose coefficients are all ones, the proof holds for every nonnegative block matrix $W$ satisfying $\beta(W)\geq\beta(\underline{A}\odot1_{nm})$ (or $\beta(W)\geq \beta(1_{nm}\odot\underline{A})$, when we are dealing with $\B'_1$).
\end{remark}

The next corollary is an immediate consequence of the previous proof:

\begin{corollary}
With the same notation as above, assume that $\overline{A}$ is irreducible and that $W$ is strictly positive. Assume moreover that each layer $A_k$ of the network has no sinks (respectively, no sources). Then $\B_1$ (resp. $\B'_1$) is irreducible.
\end{corollary}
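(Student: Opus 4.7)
The plan is to recognize that the proof of Theorem~\ref{thm:irred1} already does the heavy lifting: it shows that the submatrix $R$ obtained from $\B_1$ after permuting its zero rows to the top is irreducible. Thus, to strengthen the conclusion from uniqueness of the Perron vector to full irreducibility of $\B_1$, it suffices to rule out the existence of zero rows in $\B_1$.

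First I would verify that, since $(\underline{A}\odot W)_{k\ell}=A_k\circ W_{k\ell}$ and $W$ is strictly positive, the entry at position $(i,j)$ of block $(k,\ell)$ of $\B_1$ is nonzero precisely when $(A_k)_{ij}\neq 0$. Consequently, the row indexed by $(i,k)$ is identically zero if and only if $i$ is a sink in layer $k$. Under the hypothesis that no layer has sinks, every row of $\B_1$ is nonzero, so $\tilde{X}_R=\tilde{X}$ and $R=\B_1$. By the strong-connectedness argument inside the proof of Theorem~\ref{thm:irred1}, this $R$ is irreducible, and we are done.

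For $\B'_1=W\odot\underline{A}$ the argument is symmetric but needs a mild adaptation. Since $(W\odot\underline{A})_{k\ell}=W_{k\ell}\circ A_\ell$, the edge $(i,k)\to(j,\ell)$ in $(\tilde{X},\B'_1)$ exists if and only if $i\stackrel{\ell}{\to}j$; in particular, the column indexed by $(j,\ell)$ vanishes precisely when $j$ is a source of layer $\ell$. Running the analogue of the theorem's proof, I would, given $(i,k),(i',k')$, first pick an in-neighbor $j^{*}$ of $i'$ in layer $k'$ (available because $i'$ is not a source), then connect $i$ to $j^{*}$ through an irreducible-projection path $i=j_0\stackrel{\ell_1}{\to}j_1\stackrel{\ell_2}{\to}\cdots\stackrel{\ell_r}{\to}j_r=j^{*}$, translating each step $j_{p-1}\stackrel{\ell_p}{\to}j_p$ into the edge $(j_{p-1},\ell_{p-1})\to(j_p,\ell_p)$ of $\B'_1$, and finally using $(j^{*},\ell_r)\to(i',k')$ to close the path.

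The only (mild) obstacle is bookkeeping the shift in which index encodes the layer: in $\B_1$ the layer of an edge is determined by the source block row $k$, whereas in $\B'_1$ it is determined by the target block column $\ell$, so the path must be assembled by moving the layer index forward along the target rather than the source. Once that transposition of roles is tracked carefully, the argument mirrors that of Theorem~\ref{thm:irred1} verbatim.
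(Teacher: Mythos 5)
Your argument is correct and is exactly the one the paper intends: the paper states the corollary is ``an immediate consequence of the previous proof,'' namely that the only obstruction to irreducibility of $\B_1$ in Theorem~\ref{thm:irred1} is the presence of zero rows, which under strict positivity of $W$ occur precisely at sinks of the layers, so that with no sinks one has $R=\B_1$ and the strong-connectedness argument applies to all of $\tilde X$. Your careful transposition of roles for $\B'_1$ (layer read off the target block column, zero \emph{columns} at sources) correctly supplies the symmetric case the paper leaves implicit.
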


Let us consider now the case 2. Here we will infer the irreducibility of $\B_2=\B_2'$ from properties of $(A_1,\dots,A_m)$ and $(B_1,\dots, B_n)$.

\begin{proposition}
With the same notation as above, assume that one of the following properties holds:
\begin{itemize}
  \item[(i)] $\overline{A}$ and every $B_i$ are irreducible. 
  \item[(ii)] Every $A_k$  and $\overline{B}$ are irreducible.
\end{itemize}
Then $\B_2$ is irreducible.
\end{proposition}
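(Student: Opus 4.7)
The plan is to identify irreducibility of $\B_2$ with strong connectedness of the digraph $(\tilde X,\B_2)$, and to exploit the fact that property $(\star)$ splits the edges of this digraph into two clean families. Concretely, since $\underline{A}$ contributes edges only inside a single layer and $V$ (being blockwise diagonal) contributes only edges of the form $(i,k)\to(i,\ell)$, we have
\[
(i,k)\to(j,\ell) \ \iff\ \Big(\,k=\ell\text{ and }(A_k)_{ij}\neq 0\,\Big)\ \text{or}\ \Big(\,i=j\text{ and }(B_i)_{k\ell}\neq 0\,\Big).
\]
Call the first type \emph{horizontal} (moves inside layer $k$ along $A_k$) and the second type \emph{vertical} (moves along the fiber over node $i$ according to $B_i$). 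Strong connectedness then amounts to joining any two nodes of $\tilde X$ by an alternating horizontal/vertical path.

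For case (i), fix $(i,k),(i',k')\in\tilde X$. Irreducibility of $\overline{A}=\frac{1}{m}\sum_k A_k$ yields indices $i=i_0,i_1,\dots,i_r=i'$ in $\{1,\dots,n\}$ and layers $k_1,\dots,k_r\in\{1,\dots,m\}$ with $i_{p-1}\stackrel{k_p}{\to}i_p$ for every $p$; each such step gives the horizontal edge $(i_{p-1},k_p)\to(i_p,k_p)$. Irreducibility of each $B_i$ yields, for every node $i$ and every pair of layers, a path in the $B_i$-graph, hence a vertical path in $\tilde X$ joining $(i,k)$ and $(i,\ell)$ for any $k,\ell$. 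Concatenating,
\[
(i,k)\rightsquigarrow(i,k_1)\to(i_1,k_1)\rightsquigarrow(i_1,k_2)\to(i_2,k_2)\rightsquigarrow\cdots\to(i',k_r)\rightsquigarrow(i',k'),
\]
where $\rightsquigarrow$ denotes a vertical path (using $B_i$, $B_{i_1}$,\dots, $B_{i'}$) and $\to$ a single horizontal edge. This exhibits the required path.

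Case (ii) is entirely symmetric: swap the roles of $A$ and $B$, of $i$-index and $k$-index, and of horizontal and vertical moves. Irreducibility of $\overline{B}=\frac{1}{n}\sum_i B_i$ produces a sequence of layer transitions $k=\ell_0,\ell_1,\dots,\ell_s=k'$ with associated nodes $i_1,\dots,i_s$ such that $(B_{i_p})_{\ell_{p-1}\ell_p}\neq 0$, while irreducibility of each $A_{\ell_p}$ allows horizontal motion inside layer $\ell_p$ to reach the node $i_{p+1}$ required for the next vertical jump. Starting at $(i,k)$ and ending at $(i',k')$, the same kind of concatenated alternating path gives strong connectedness.

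There is no real technical obstacle; the only point requiring care is the bookkeeping that ensures the alternation is well-defined. In particular, one must insert an initial (vertical in case (i), horizontal in case (ii)) and a final adjustment to pass from the starting $(i,k)$ to the first vertex on the path built from the irreducible projection, and from the last vertex of that path to $(i',k')$; this uses exactly the hypothesis that \emph{every} $B_i$ (resp.\ every $A_k$) is irreducible, not just $\overline B$ (resp.\ $\overline A$). Once these endpoint adjustments are in place, the construction above goes through verbatim and yields the irreducibility of $\B_2$.
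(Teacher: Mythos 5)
Your proof is correct and follows essentially the same route as the paper's: both decompose the edges of $(\tilde X,\B_2)$ into intra-layer moves coming from $\underline{A}$ and intra-fiber moves coming from $V$ (via property $(\star)$), lift an $\overline{A}$-path to horizontal edges, and splice in vertical paths supplied by the irreducibility of the $B_i$'s (with the symmetric argument for case (ii)). The endpoint adjustments you emphasize are exactly the paper's step of setting $k_0:=k$, $k_{r+1}:=k'$ and joining $(j_p,k_p)$ to $(j_p,k_{p+1})$ for $p=0,\dots,r$.
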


\begin{proof}
As usual, we will discuss the proof in terms of the subjacent networks.
In the first case, given two pairs $(i,k),(i',k')\in \tilde{X}$, the irreducibility of $\overline{A}$ provides a sequences of edges:
$$
i=j_0\stackrel{k_1}{\to}j_1\stackrel{k_2}{\to}\dots\stackrel{k_r}{\to}j_r=i'.
$$
That is, we have links $$(i=j_0,k_1)\to(j_1,k_1),\quad (j_1,k_2)\to (j_2,k_2),\quad\dots,\quad (j_{r-1},k_r)\to(i'=j_{r},k_r).$$
Denote $k_0:=k$, $k_{r+1}:=k'$. Then, the irreducibility of the $B_i$'s provides sequences of edges joining $(j_{p},k_{p})$ with $(j_p,k_{p+1})$ for all $p=0,\dots, r$. Joining all these sequence conveniently, we have a sequence of edges joining $(i,k)$ and $(i',k')$. The irreducibility of $\B_2$ under the second set of hypotheses is analogous.
\end{proof}

\begin{remark}
As we may see in this Proposition, in this second setup, the links within layers and between layers play a symmetric role. In this way, every theorem about $\B_2$ written in terms of $A$ and $B$ will always have a symmetric counterpart. 
\end{remark}

 \section{Computation of Block Perron Vectors in terms of low-dimensional vectors}
 \label{sec:eigenvector}




Our approach is based on the Perron complementation method for finding the Perron eigenvector of a nonnegative irreducible matrix $A_{m\times m}$ with spectral radius $\rho$, see \cite{Meyer89}.  This method consists of uncoupling $A$ into smaller matrices whose Perron eigenvectors are coupled together in order to recover the Perron eigenvector of $A$ and it is described in Appendix \ref{annex}.  The Perron eigenvector $\pi=\left(\begin{array}{c}
\pi^1 \\ \hline
\pi^2 \\ \hline
\vdots \\ \hline
\pi^k
\end{array}
\right)>0
$ of each of  $\B_1,\B_2,\B_3$  is of the form 
 $\pi=\left(\begin{array}{c}
\xi^1p_1 \\ \hline
\xi^2p_2 \\ \hline
\vdots \\ \hline
\xi^kp_k
\end{array}
\right)>0
$
where each $p_i$ is the Perron eigenvector of the Perron complement $P_{ii}$, and will be calculated for all the three cases, and the normalizing scalars  or  coupling factors $\xi_i$ turn to be the $i^{\rm th}$-components of the Perron eigenvector of $W^t.$

Our only assumption is that $W$ is row-stochastic and that no $i^{\rm th}$-row of $W$ equals the $i^{\rm th}$-vector of the canonical basis $e_i$ of $\mathbb{R}^m$ (this means that all layers have influence at least on some other layer).

\noindent\underbar{Block matrix of type $\B_1$}:
The obtention of the Perron eigenvector $\pi$ of $\B_1$  follows from combining the $p_i's$ with the coupling factor, which is the Perron eigenvector of $W^t$. Remember that
\[
\B_1=
\left(
\begin{array}{c|c|c|c}
w_{11}L_1^t    & w_{21}L_2^t & \cdots & w_{m1}L_m^t \\ \hline
w_{12}L_1^t & w_{22}L_2^t    & \cdots & w_{m2}L_m^t \\ \hline
\vdots & \vdots & \ddots & \vdots \\ \hline
w_{1m}L_1^t & w_{2m}L_2 ^t & \cdots & w_{mm}L_m^t
\end{array}
\right)\in \mathbb{R}^{nm\times nm}.
\]
Let us calculate the Perron eigenvector $p_1$ of the Perron complement $P_{11}$.
First calculate $\left(
\begin{array}{c}
Q_2\\ \hline
Q_3\\ \hline
\vdots  \\ \hline
Q_m
\end{array}
\right)$, which is an eigenvector associated to 1 of the matrix
$${\mathcal A_1}^{p_1}=w_{11}L+\tilde{W}_{11}^{(1)}-w_{11}L\tilde{W}_{11}^{(1)} +\left(\begin{array}{c}
w_{12}L_1^t \\ \hline
w_{13}L_1^t\\ \hline
\vdots  \\ \hline
w_{1m}L_1^t
\end{array}
\right)(w_{21}L_2^t\dots w_{m1}L_m^t)
$$
where $L=\left(
\begin{array}{c|c|c}
L_1^t  & \cdots & 0 \\ \hline
0 &L_1^t &  \cdots  \\ \hline
\vdots &  \ddots & \vdots \\ \hline
0 &\cdots & L_1^t
\end{array}
\right)
$
and
$\tilde{W}_{11}^{(1)}=\left(
\begin{array}{c|c|c}
w_{22}L_2^t     & \cdots & w_{m2}L_m^t \\ \hline
w_{23}L_2^t &  \cdots & w_{m2}L_m^t \\ \hline
\vdots &  \ddots & \vdots \\ \hline
w_{2m}L_2^t  & \cdots & w_{mm}L_m^t
\end{array}
\right).$

Once the $Q_i's$ are obtained  use
$$
\left(
\begin{array}{c}
w_{12}L_1^t \\ \hline
w_{13}L_1^t\\ \hline
\vdots  \\ \hline
w_{1m}L_1^t
\end{array}
\right)p_1=\left(Id-\left(
\begin{array}{c|c|c}
w_{22}L_2^t     & \cdots & w_{m2}L_m^t \\ \hline
w_{23}L_2^t &  \cdots & w_{m2}L_m^t \\ \hline
\vdots &  \ddots & \vdots \\ \hline
w_{2m}L_2^t  & \cdots & w_{mm}L_m^t
\end{array}
\right)\right)\left(
\begin{array}{c}
Q_2\\ \hline
Q_3\\ \hline
\vdots  \\ \hline
Q_m
\end{array}
\right)
$$
to get $L_1^tp_1$ (remember that some of the $w_{1i}\ne 0$), and then
the equality
$$
p_1=w_{11}L_1^t p_1+(w_{21}L_2^t\dots w_{m1}L_m^t)\left(
\begin{array}{c}
Q_2\\ \hline
Q_3\\ \hline
\vdots  \\ \hline
Q_m
\end{array}
\right)$$
to recover $p_1$.%

The remaining $p_i's$ are analogously calculated.

\noindent\underbar{Block matrix of type $\B_2$}: The obtention of the Perron eigenvector $\pi$ of $\B_2$  follows from combining the $p_i's$ with the coupling factor, which is the Perron eigenvector of $W^t$. Remember that
\[
\B_2=
\left(
\begin{array}{c|c|c|c}
w_{11}L_1^t    & w_{21}L_1^t & \cdots & w_{m1}L_1^t \\ \hline
w_{12}L_2^t & w_{22}L_2^t    & \cdots & w_{m2}L_2^t \\ \hline
\vdots & \vdots & \ddots & \vdots \\ \hline
w_{1m}L_m^t & w_{2m}L_m ^t & \cdots & w_{mm}L_m^t
\end{array}
\right)\in \mathbb{R}^{nm\times nm}.
\]
Let us calculate the Perron eigenvector $p_1$ of the Perron complement $P_{11}$. 
First calculate
 $\left(
\begin{array}{c}
Q_2\\ \hline
Q_3\\ \hline
\vdots  \\ \hline
Q_m
\end{array}
\right),
$
which is an eigenvector associated to 1 of the matrix
$${\mathcal A_2}^{p_1}=w_{11}L+\tilde{W}_{11}^{(2)}-w_{11}\tilde{W}_{11}^{(2)}L +\left(
\begin{array}{c}
w_{12}L_2^t \\ \hline
w_{13}L_3^t\\ \hline
\vdots  \\ \hline
w_{1m}L_m^t
\end{array}
\right)
(w_{21}L_1^t,\dots, w_{m1}L_1^t)
$$
where $L=\left(
\begin{array}{c|c|c}
L_1^t  & \cdots & 0 \\ \hline
0 &L_1^t &  \cdots  \\ \hline
\vdots &  \ddots & \vdots \\ \hline
0 &\cdots & L_1^t
\end{array}
\right)
$
and
$\tilde{W}_{11}^{(2)}=\left(
\begin{array}{c|c|c}
w_{22}L_2^t     & \cdots & w_{m2}L_2^t \\ \hline
w_{23}L_3^t &  \cdots & w_{m2}L_3^t \\ \hline
\vdots &  \ddots & \vdots \\ \hline
w_{2m}L_m^t  & \cdots & w_{mm}L_m^t
\end{array}
\right).$
Once the $Q_i's$ are obtained,
$$
p_1=(w_{21}L_1,\dots, w_{m1}L_1)\left(
\begin{array}{c}
Q_2\\ \hline
Q_3\\ \hline
\vdots  \\ \hline
Q_m
\end{array}
\right).
$$
The remaining $p_i's$ are analogously obtained.
%

\noindent\underbar{Block matrix of type $\B_3$}: The obtention of the Perron eigenvector $\pi$ of $\B_3$  follows from combining the $p_i's$ with the coupling factor, which is the Perron eigenvector of $W^t$. Remember that
\[
\B_3=
\left(
\begin{array}{c|c|c|c}
w_{11}L_1^t    & w_{21}Id & \cdots & w_{m1}Id \\ \hline
w_{12}Id & w_{22}L_2^t    & \cdots & w_{m2}Id \\ \hline
\vdots & \vdots & \ddots & \vdots \\ \hline
w_{1m}Id & w_{2m}Id & \cdots & w_{mm}L_m^t
\end{array}
\right)\in \mathbb{R}^{nm\times nm}.
\]
The calculation of the Perron eigenvector $p_1$ of the Perron complement $P_{11}$ can be done as follows:
%
calculate $\left(
\begin{array}{c}
Q_2 \\ \hline
Q_3\\ \hline
\vdots  \\ \hline
Q_m
\end{array}
\right),$ which is an eigenvector associated to 1 of the matrix
$${\mathcal A_3}^{p_1}=w_{11}L+\tilde{W}_{11}^{(3)}-w_{11}L\tilde{W}_{11}^{(3)} +\left(\begin{array}{c}
w_{12}Id \\ \hline
w_{13}Id\\ \hline
\vdots  \\ \hline
w_{1m}Id
\end{array}
\right)(w_{21}Id\dots w_{m1}Id)
$$ where
$L=\left(
\begin{array}{c|c|c}
L_1^t  & \cdots & 0 \\ \hline
0 &L_1^t &  \cdots  \\ \hline
\vdots &  \ddots & \vdots \\ \hline
0 &\cdots & L_1^t
\end{array}
\right)
$
and
$\tilde{W}_{11}^{(3)}=\left(
\begin{array}{c|c|c|c}
w_{22}L_2^t & w_{32}Id   & \cdots & w_{m2}Id \\ \hline
w_{23}Id &w_{33}L_3^t &  \cdots & w_{m2}Id \\ \hline
\vdots & \vdots & \ddots & \vdots \\ \hline
w_{2m}Id & w_{3m}Id & \cdots & w_{mm}L_m^t
\end{array}
\right).$

Once the $Q_i's$ are obtained  use
$$
\left(
\begin{array}{c}
w_{12}Id \\ \hline
w_{13}Id\\ \hline
\vdots  \\ \hline
w_{1m}Id
\end{array}
\right)p_1=\left(Id-\left(
\begin{array}{c|c|c|c}
w_{22}L_2^t & w_{32}Id & \cdots & w_{m2}Id \\ \hline
w_{23}Id & \cdots& \cdots & w_{m2}L_m^t \\ \hline
\vdots &  \vdots & \ddots	 &\vdots \\ \hline
w_{2m}Id  & w_{3m}Id& \cdots & w_{mm}L_m^t
\end{array}
\right)\right)\left(
\begin{array}{c}
Q_2 \\ \hline
Q_3\\ \hline
\vdots  \\ \hline
Q_m
\end{array}
\right)
$$
to recover $p_1$ (remember that some $w_{1j}\ne 0$).

The remaining $p_i's$ are analogously calculated.


%
%

\bigskip
\bigskip

\bigskip
\subsection{Particular case of two layers ($m=2$)} 

We will show that the eigenvectors associated to the principal eigenvalue 1 can be computed in terms of the eigenvectors associated to 1 of certain matrices related to $L_1^t$, $L_2^t$ and the elements of $W$. The only assumption on $W$ is that it is row-stochastic. The details of the calculations will be shown in \S\ref{annex}.

\noindent\underbar{Block matrix of type $\B_1$, $m=2$}:

\[
\B_1=
\left(
\begin{array}{c|c}
w_{11}L_1^t    & w_{21}L_2^t  \\ \hline
w_{12}L_1^t & w_{22}L_2^t    \\ 
\end{array}
\right),
\]
where $L_\ell^t$ is the transpose of the row normalization of the adjacency matrix of layer $S_\ell$.

\noindent (a) If both $w_{11}\ne 1$ and $w_{22}\ne 1$ then if $\left(
\begin{array}{c}
\pi_1 \\ \hline
\pi_2
\end{array}
\right)$  is an eigenvector associated to the eigenvalue 1,
we get that
$\pi_1$ and $\pi_2$ are eigenvectors associated to 1 to the column stochastic matrices
$$\begin{array}{ll}
      {\mathcal A}_1^{\pi_1}=(w_{11}L_1^t+ w_{22}L_2^t+(1-w_{11}-w_{22})L_2^tL_1^t), \hbox{ and} \\
      {\mathcal A}_1^{\pi_2}=(w_{11}L_1^t+ w_{22}L_2^t+(1-w_{11}-w_{22})L_1^tL_2^t).
    \end{array}
$$

\noindent (b) If $w_{11}=1$ then $w_{12}=0$
and if the vector
$\left(
\begin{array}{c}
\pi_1 \\ \hline
\pi_2
\end{array}
\right)$ is associated to the eigenvalue 1 then
we have one of the three following situations:

\noindent (b.1) $0<w_{22}<1$: 
the eigenvectors associated to 1 of $\B_1$ have the form
$\left(
\begin{array}{c}
\pi_1 \\ \hline
0
\end{array}
\right)$
where $\pi_1$ is an eigenvector of $L_1^t$ associated to 1.

\noindent (b.2) $w_{22}=0$: 
the eigenvectors associated to $\B_1$ have the form
$\left(
\begin{array}{c}
\pi_1 \\ \hline
0
\end{array}
\right)$
where $\pi_1$ is an eigenvector of $L_1^t$ associated to 1.

\noindent (b.3) $w_{22}=1$:
the eigenvectors of $\B_1$ associated to 1 have the form  $\left(
\begin{array}{c}
\pi_1 \\ \hline
\pi_2
\end{array}
\right)$ where $\pi_1$ is an eigenvector of $L_1^t$ associated to 1 and $\pi_2$ an eigenvector of $L_2^t$ associated to 1.

\noindent (c) If $w_{22}=1$ then, arguing as in case (b) either $w_{11}=1$ and we are again in the situation of (b.3) or the  eigenvector of $\B_1$ associated to 1 are of the form $\left(
\begin{array}{c}
0    \\ \hline
\pi_2
\end{array}
\right)$ where $\pi_2$ is an eigenvector of $L_2^t$ associated to 1.

\noindent\underbar{Block matrix of type $\B_2$, $m=2$}:

\[
\B_2=
\left(
\begin{array}{c|c}
w_{11}L_1^t    & w_{21}L_1^t  \\ \hline
w_{12}L_2^t & w_{22}L_2^t    \\ 
\end{array}
\right),
\]
where $L_\ell^t$ is the transpose of the row normalization of the adjacency matrix of layer $S_\ell$.

\noindent (a) If both $w_{11}\ne 1$ and $w_{22}\ne 1$ then if $\left(
\begin{array}{c}
\pi_1 \\ \hline
\pi_2
\end{array}
\right)$  is an eigenvector associated to the eigenvalue 1
and defining $\pi_1^{aux}=(I-w_{11}L_1^t)^{-1} (I-w_{22}L_2^t)^{-1}L_2^t \pi_1$ and $\pi_2^{aux}=(I-w_{22}L_2^t)^{-1}(I-w_{11}L_1^t)^{-1}L_1^t\pi_2$, we get that
$\pi_1^{aux}$ and $\pi_2^{aux}$ are eigenvectors associated to 1 of the column stochastic matrices
$$\begin{array}{ll}
      {\mathcal A}_2^{\pi_1^{aux}}=(w_{11}L_1^t+w_{22}L_2^t-w_{11}w_{22}L_2^tL_1^t+w_{12}w_{21}L_2^tL_1^t), \hbox{ and} \\
      {\mathcal A}_2^{\pi_2^{aux}}=(w_{11}L_1^t+w_{22}L_2^t-w_{11}w_{22}L_1^tL_2^t+w_{12}w_{21}L_1^tL_2^t).
    \end{array}
$$ After computing $\pi_1^{aux}$ and $\pi_2^{aux}$,
$$\left\{
    \begin{array}{ll}
     \pi_1=w_{12}w_{21}L_1^t \pi_1^{aux},  \\
     \pi_2=w_{12}w_{21}L_2^t \pi_2^{aux}.
    \end{array}
  \right.
$$

\noindent (b) ($w_{11}=1$) and (c)  ($w_{22}=1$) give the same results as for matrices of type $\B_1$.

\noindent\underbar{Block matrix of type $\B_3$, $m=2$}:

\[
\B_2=
\left(
\begin{array}{c|c}
w_{11}L_1^t    & w_{21}I_2  \\ \hline
w_{12}I_2 & w_{22}L_2^t    \\ 
\end{array}
\right),
\]
where $L_\ell^t$ is the transpose of the row normalization of the adjacency matrix of layer $S_\ell$.

\noindent (a) If both $w_{11}\ne 1$ and $w_{22}\ne 1$ then if $\left(
\begin{array}{c}
\pi_1 \\ \hline
\pi_2
\end{array}
\right)$  is an eigenvector associated to the eigenvalue 1,
we get that
 $\pi_1$ and $\pi_2$ are eigenvectors associated to 1 to the column stochastic matrices
$$
\begin{array}{ll}
      {\mathcal A}_2^{\pi_1}=(w_{11}L_1^t+w_{22}L_2^t-w_{11}w_{22}L_2^tL_1^t+ w_{12}w_{21}I_2), \hbox{ and} \\
      {\mathcal A}_2^{\pi_2}=(w_{11}L_1^t+w_{22}L_2^t-w_{11}w_{22}L_1^tL_2^t+w_{12}w_{21}I_2).
    \end{array}
$$

\noindent (b) ($w_{11}=1$) and (c)  ($w_{22}=1$) give the same results as for matrices of type $\B_1$.







\newpage

\appendix
 \section{Mathematical proof of the results of section 4}
 \label{annex}

%
%
%


\noindent\underbar{Perron complementation method} for finding the Perron vector of a nonnegative irreducible matrix $A_{m\times m}$ with spectral radius $\rho$ (\cite{Meyer89}): This method consists of uncoupling $A$ into smaller matrices whose Perron vectors are coupled together in order to recover the Perron vector of $A$. Let us briefly recall it:

Given a $k$-level partition

 $$A=\left(
\begin{array}{c|c|c|c}
A_{11}    & A_{12} & \cdots & A_{1k} \\ \hline
A_{21} & A_{22}    & \cdots & A_{2k} \\ \hline
\vdots & \vdots & \ddots & \vdots \\ \hline
A_{k1} & A_{k2} & \cdots & A_{kk}
\end{array}
\right)$$
where all the diagonal blocks $A_{ii}$ are square, we consider  the principal block submatrices $A_i$ of $A$ obtained by deleting the $i^{\rm th}$-row of blocks and the $i^{\rm th}$-column of blocks from $A$. We also consider $$A_{i*}=(A_{i1} A_{i2}\cdots A_{i,i-1}\  A_{i,i+1}\cdots A_{ik})$$ and

$$A_{*i}=\left(\begin{array}{c}
A_{1i} \\ \hline
\vdots \\ \hline
A_{i-1,i} \\ \hline
A_{i+1,i} \\ \hline
\vdots \\ \hline
A_{ki}
\end{array}
\right).
$$

The Perron complement of $A_{ii}$ in $A$ is defined as the matrix
$$
P_{ii}=A_{ii}+A_{i*}(\rho Id-A_i)^{-1}A_{*i}.
$$

The importance of the Perron complements stems from the fact that if $A$ is nonnegative and irreducible with spectral radius $\rho$, then $P_{ii}$ is also nonnegative and irreducible with spectral radius $\rho$. In addition, if
$\pi=\left(\begin{array}{c}
\pi^1 \\ \hline
\pi^2 \\ \hline
\vdots \\ \hline
\pi^k
\end{array}
\right)>0
$
is the Perron vector of $A$, partitioned accordingly, then $P_{ii}\pi^i=\rho\pi^i$,
that is, $\pi^i$ is a positive eigenvector of $P_{ii}$ associated to $\rho$ (\cite[Thm 2.1 and 2.2]{Meyer89}).
Call  $p_i\equiv\dfrac{\pi^i}{\|\pi^i\|_1}$, the Perron vector of $P_{ii}$. The normalizing scalar $\xi^i\equiv \|\pi^i\|_1$, or \emph{coupling factor}, turns out to be the $i^{\rm th}$-component of the Perron eigenvector
$\left(\begin{array}{c}
\xi^1 \\ \hline
\xi^2 \\ \hline
\vdots \\ \hline
\xi^k
\end{array}
\right)$
of the \emph{coupling matrix} $C\equiv (c_{ij})$ , where $c_{ij}=\|A_{ij}p_j\|_1$. Thus, the Perron vector $\pi$ can be expressed as
$\pi=\left(\begin{array}{c}
\xi^1p_1 \\ \hline
\xi^2 p_2\\ \hline
\vdots \\ \hline
\xi^k p_k
\end{array}
\right).$

Our immediate task is to identify the Perron complements for each of the three types of matrices considered and proceed accordingly. Each $L_\ell$ is row stochastic and therefore $L_\ell^t$ is column stochastic; similarly $W$ is row stochastic,  hence  each of the matrices $\B_1, \B_2$ and $\B_3$ given in Section 1 is also column stochastic and its maximal eigenvalue is one.

It will be assumed that  no $i^{\rm th}$-row of $W$ equals the $i^{\rm th}$-vector of the canonical basis $e_i$ of $\mathbb{R}^m$ (this means that all layers have influence at least on some other layer). 

As for the coupling matrix $C$, since $L_\ell^t$ are column stochastic, in each of the three cases we get that $C=W^t$ and therefore the coupling factors correspond to the Perron eigenvector of $W^t$.

\noindent\underbar{Block matrix of type $\B_1$}: The obtention of the Perron vector $\pi$ of $\B_1$  follows from combining the $p_i's$ with the coupling factor, which is the Perron vector of $W^t$.

\[
\B_1=
\left(
\begin{array}{c|c|c|c}
w_{11}L_1^t    & w_{21}L_2^t & \cdots & w_{m1}L_m^t \\ \hline
w_{12}L_1^t & w_{22}L_2^t    & \cdots & w_{m2}L_m^t \\ \hline
\vdots & \vdots & \ddots & \vdots \\ \hline
w_{1m}L_1^t & w_{2m}L_2 ^t & \cdots & w_{mm}L_m^t
\end{array}
\right)\in \mathbb{R}^{nm\times nm}.
\]

Let us calculate the Perron vector $p_1$ of the Perron complement $P_{11}$. It satisfies
\begin{align*}
&p_1=w_{11}L_1^t p_1+(w_{21}L_2^t\dots w_{m1}L_m^t)
\left(Id-\left(
\begin{array}{c|c|c}
w_{22}L_2^t     & \cdots & w_{m2}L_m^t \\ \hline
w_{23}L_2^t &  \cdots & w_{m2}L_m^t \\ \hline
\vdots &  \ddots & \vdots \\ \hline
w_{2m}L_2^t  & \cdots & w_{mm}L_m^t
\end{array}
\right)\right)^{-1}
\left(
\begin{array}{c}
w_{12}L_1^t \\ \hline
w_{13}L_1^t\\ \hline
\vdots  \\ \hline
w_{1m}L_1^t
\end{array}
\right)p_1
\end{align*}

Then
$$
\left(
\begin{array}{c}
w_{12}L_1^t \\ \hline
w_{13}L_1^t\\ \hline
\vdots  \\ \hline
w_{1m}L_1^t
\end{array}
\right)p_1
=w_{11}\left(\begin{array}{c}
w_{12}L_1^t \\ \hline
w_{13}L_1^t\\ \hline
\vdots  \\ \hline
w_{1m}L_1^t
\end{array}
\right)
L_1^tp_1+
\left(
\begin{array}{c}
w_{12}L_1^t \\ \hline
w_{13}L_1^t\\ \hline
\vdots  \\ \hline
w_{1m}L_1^t
\end{array}
\right)(w_{21}L_2^t\dots w_{m1}L_m^t)\left(
\begin{array}{c}
Q_2\\ \hline
Q_3\\ \hline
\vdots  \\ \hline
Q_m
\end{array}
\right)
$$
where the following change of variables is used
$$\left(
\begin{array}{c}
Q_2\\ \hline
Q_3\\ \hline
\vdots  \\ \hline
Q_m
\end{array}
\right)=
\left(Id-\left(
\begin{array}{c|c|c}
w_{22}L_2^t     & \cdots & w_{m2}L_m^t \\ \hline
w_{23}L_2^t &  \cdots & w_{m2}L_m^t \\ \hline
\vdots &  \ddots & \vdots \\ \hline
w_{2m}L_2^t  & \cdots & w_{mm}L_m^t
\end{array}
\right)\right)^{-1}
\left(
\begin{array}{c}
w_{12}L_1^t \\ \hline
w_{13}L_1^t\\ \hline
\vdots  \\ \hline
w_{1m}L_1^t
\end{array}
\right)p_1.
$$
Equivalently

$
\left(Id-\left(
\begin{array}{c|c|c}
w_{22}L_2^t     & \cdots & w_{m2}L_m^t \\ \hline
w_{23}L_2^t &  \cdots & w_{m2}L_m^t \\ \hline
\vdots &  \ddots & \vdots \\ \hline
w_{2m}L_2^t  & \cdots & w_{mm}L_m^t
\end{array}
\right)\right)\left(
\begin{array}{c}
Q_2\\ \hline
Q_3\\ \hline
\vdots  \\ \hline
Q_m
\end{array}
\right)=
w_{11}
\left(
\begin{array}{c|c|c|c}
L_1^t &0    & \cdots & 0 \\ \hline
0 &L_1^t &  \cdots & 0 \\ \hline
\vdots & \vdots & \ddots & \vdots \\ \hline
0 & 0&\cdots & L_1^t
\end{array}
\right)
\left(\begin{array}{c}
w_{12}L_1^t \\ \hline
w_{13}L_1^t\\ \hline
\vdots  \\ \hline
w_{1m}L_1^t
\end{array}
\right)
p_1+
\left(
\begin{array}{c}
w_{12}L_1^t \\ \hline
w_{13}L_1^t\\ \hline
\vdots  \\ \hline
w_{1m}L_1^t
\end{array}
\right)(w_{21}L_2^t\dots w_{m1}L_m^t)
\left(
\begin{array}{c}
Q_2\\ \hline
Q_3\\ \hline
\vdots  \\ \hline
Q_m
\end{array}
\right),
$
\vspace{5mm}

or
\vspace{5mm}

$
\left(Id-
\begin{array}{c|c|c}
w_{22}L_2^t     & \cdots & w_{m2}L_m^t \\ \hline
w_{23}L_2^t &  \cdots & w_{m2}L_m^t \\ \hline
\vdots &  \ddots & \vdots \\ \hline
w_{2m}L_2^t  & \cdots & w_{mm}L_m^t
\end{array}
\right)\left(
\begin{array}{c}
Q_2\\ \hline
Q_3\\ \hline
\vdots  \\ \hline
Q_m
\end{array}
\right)=\\
w_{11}
\left(
\begin{array}{c|c|c|c}
L_1^t &0    & \cdots & 0 \\ \hline
0 &L_1^t &  \cdots & 0 \\ \hline
\vdots & \vdots & \ddots & \vdots \\ \hline
0 & 0&\cdots & L_1^t
\end{array}
\right)
\left(Id-
\begin{array}{c|c|c}
w_{22}L_2^t     & \cdots & w_{m2}L_m^t \\ \hline
w_{23}L_2^t &  \cdots & w_{m2}L_m^t \\ \hline
\vdots &  \ddots & \vdots \\ \hline
w_{2m}L_2^t  & \cdots & w_{mm}L_m^t
\end{array}
\right)\left(
\begin{array}{c}
Q_2\\ \hline
Q_3\\ \hline
\vdots  \\ \hline
Q_m
\end{array}
\right)
+\\
+\left(
\begin{array}{c}
w_{12}L_1^t \\ \hline
w_{13}L_1^t\\ \hline
\vdots  \\ \hline
w_{1m}L_1^t
\end{array}
\right)(w_{21}L_2^t\dots w_{m1}L_m^t)
\left(
\begin{array}{c}
Q_2\\ \hline
Q_3\\ \hline
\vdots  \\ \hline
Q_m
\end{array}\right).
$
\vspace{2mm}
This is equivalent to
$\left(
\begin{array}{c}
Q_2\\ \hline
Q_3\\ \hline
\vdots  \\ \hline
Q_m
\end{array}
\right)$ being an eigenvector associated to 1 of the matrix

$${\mathcal A_1}^{p_1}=w_{11}L+\tilde{W}_{11}^{(1)}-w_{11}L\tilde{W}_{11}^{(1)} +\left(\begin{array}{c}
w_{12}L_1^t \\ \hline
w_{13}L_1^t\\ \hline
\vdots  \\ \hline
w_{1m}L_1^t
\end{array}
\right)(w_{21}L_2^t\dots w_{m1}L_m^t)
$$

where $L=\left(
\begin{array}{c|c|c}
L_1^t  & \cdots & 0 \\ \hline
0 &L_1^t &  \cdots  \\ \hline
\vdots &  \ddots & \vdots \\ \hline
0 &\cdots & L_1^t
\end{array}
\right)
$
and
$\tilde{W}_{11}^{(1)}=\left(
\begin{array}{c|c|c}
w_{22}L_2^t     & \cdots & w_{m2}L_m^t \\ \hline
w_{23}L_2^t &  \cdots & w_{m2}L_m^t \\ \hline
\vdots &  \ddots & \vdots \\ \hline
w_{2m}L_2^t  & \cdots & w_{mm}L_m^t
\end{array}
\right).$
Once the $Q_i's$ are obtained we use
$$
\left(
\begin{array}{c}
w_{12}L_1^t \\ \hline
w_{13}L_1^t\\ \hline
\vdots  \\ \hline
w_{1m}L_1^t
\end{array}
\right)p_1=\left(Id-\left(
\begin{array}{c|c|c}
w_{22}L_2^t     & \cdots & w_{m2}L_m^t \\ \hline
w_{23}L_2^t &  \cdots & w_{m2}L_m^t \\ \hline
\vdots &  \ddots & \vdots \\ \hline
w_{2m}L_2^t  & \cdots & w_{mm}L_m^t
\end{array}
\right)\right)\left(
\begin{array}{c}
Q_2\\ \hline
Q_3\\ \hline
\vdots  \\ \hline
Q_m
\end{array}
\right)
$$
to get $L_1^tp_1$ (since some $w_{1i}\ne 0$), and then
the equality
$$
p_1=w_{11}L_1^t p_1+(w_{21}L_2^t\dots w_{m1}L_m^t)\left(
\begin{array}{c}
Q_2\\ \hline
Q_3\\ \hline
\vdots  \\ \hline
Q_m
\end{array}
\right)$$
to recover $p_1$.

The remaining $p_i's$ are analogously calculated.

\noindent\underbar{Block matrix of type $\B_2$}:
The obtention of the Perron vector $\pi$ of $\B_2$  follows from combining the $p_i's$ with the coupling factor, which is the Perron vector of $W^t$.
\[
\B_2=
\left(
\begin{array}{c|c|c|c}
w_{11}L_1^t    & w_{21}L_1^t & \cdots & w_{m1}L_1^t \\ \hline
w_{12}L_2^t & w_{22}L_2^t    & \cdots & w_{m2}L_2^t \\ \hline
\vdots & \vdots & \ddots & \vdots \\ \hline
w_{1m}L_m^t & w_{2m}L_m ^t & \cdots & w_{mm}L_m^t
\end{array}
\right)\in \mathbb{R}^{nm\times nm}.
\]

Let us calculate the Perron vector $p_1$ of the Perron complement $P_{11}$. It satisfies
$$
p_1=w_{11}L_1^t p_1+(w_{21}L_1^t\dots w_{m1}L_1^t)
\left(Id-\left(
\begin{array}{c|c|c}
w_{22}L_2^t     & \cdots & w_{m2}L_2^t \\ \hline
w_{23}L_3^t &  \cdots & w_{m2}L_3^t \\ \hline
\vdots &  \ddots & \vdots \\ \hline
w_{2m}L_m^t  & \cdots & w_{mm}L_m^t
\end{array}
\right)\right)^{-1}
\left(
\begin{array}{c}
w_{12}L_2^t \\ \hline
w_{13}L_3^t\\ \hline
\vdots  \\ \hline
w_{1m}L_m^t
\end{array}
\right)p_1
$$
so 
$(Id-w_{11}L_1^t)p_1=(w_{21}L_1^t\dots w_{m1}L_1^t)
\left(Id-\left(
\begin{array}{c|c|c}
w_{22}L_2^t     & \cdots & w_{m2}L_2^t \\ \hline
w_{23}L_3^t &  \cdots & w_{m2}L_3^t \\ \hline
\vdots &  \ddots & \vdots \\ \hline
w_{2m}L_m^t  & \cdots & w_{mm}L_m^t
\end{array}
\right)\right)^{-1}
\left(
\begin{array}{c}
w_{12}L_2^t \\ \hline
w_{13}L_3^t\\ \hline
\vdots  \\ \hline
w_{1m}L_m^t
\end{array}
\right)p_1
$
or, as $w_{11}\neq 1$,
$$p_1=(Id-w_{11}L_1^t)^{-1}(w_{21}L_1^t\dots w_{m1}L_1^t)
\left(Id-\left(
\begin{array}{c|c|c}
w_{22}L_2^t     & \cdots & w_{m2}L_2^t \\ \hline
w_{23}L_3^t &  \cdots & w_{m2}L_3^t \\ \hline
\vdots &  \ddots & \vdots \\ \hline
w_{2m}L_m^t  & \cdots & w_{mm}L_m^t
\end{array}
\right)\right)^{-1}
\left(
\begin{array}{c}
w_{12}L_2^t \\ \hline
w_{13}L_3^t\\ \hline
\vdots  \\ \hline
w_{1m}L_m^t
\end{array}
\right)p_1.
$$
Now, calling
$$
\tilde{C}=Id-
\left(\begin{array}{c|c|c|c}
(Id-w_{11}L_1^t)^{-1} &0    & \cdots & 0 \\ \hline
0 &  (Id-w_{11}L_1^t)^{-1} &\cdots & 0 \\ \hline
\vdots &  \vdots & \ddots &\vdots\\ \hline
0 &0 & \cdots & (Id-w_{11}L_1^t)^{-1}
\end{array}
\right)
$$
we get by matrix commutation
$$p_1=(w_{21}L_1^t\dots w_{m1}L_1^t)\
\tilde{C}\left(Id-\left(
\begin{array}{c|c|c}
w_{22}L_2^t     & \cdots & w_{m2}L_2^t \\ \hline
w_{23}L_3^t &  \cdots & w_{m2}L_3^t \\ \hline
\vdots &  \ddots & \vdots \\ \hline
w_{2m}L_m^t  & \cdots & w_{mm}L_m^t
\end{array}
\right)\right)^{-1}
\left(
\begin{array}{c}
w_{12}L_2^t \\ \hline
w_{13}L_3^t\\ \hline
\vdots  \\ \hline
w_{1m}L_m^t
\end{array}
\right)p_1.
$$
Multiplyig in both sides by
$
\left(
\begin{array}{c}
w_{12}L_2^t\\ \hline
w_{13}L_3^t\\ \hline
\vdots  \\ \hline
w_{1m}L_m^t
\end{array}
\right)
$
and using the change of variables
$\left(
\begin{array}{c}
Q_2\\ \hline
Q_3\\ \hline
\vdots  \\ \hline
Q_m
\end{array}
\right)\equiv\tilde{C}\left(Id-
\begin{array}{c|c|c}
w_{22}L_2^t     & \cdots & w_{m2}L_2^t \\ \hline
w_{23}L_3^t &  \cdots & w_{m2}L_3^t \\ \hline
\vdots &  \ddots & \vdots \\ \hline
w_{2m}L_m^t  & \cdots & w_{mm}L_m^t
\end{array}
\right)^{-1}
\left(
\begin{array}{c}
w_{12}L_2^t \\ \hline
w_{13}L_3^t\\ \hline
\vdots  \\ \hline
w_{1m}L_m^t
\end{array}
\right)p_1$
\vspace{3mm}
we get that
$\left(
\begin{array}{c}
Q_2\\ \hline
Q_3\\ \hline
\vdots  \\ \hline
Q_m
\end{array}
\right)
$
is an eigenvector associated to 1 of the matrix
$${\mathcal A_2}^{p_1}=w_{11}L+\tilde{W}_{11}^{(2)}-w_{11}\tilde{W}_{11}^{(2)}L +\left(
\begin{array}{c}
w_{12}L_2^t \\ \hline
w_{13}L_3^t\\ \hline
\vdots  \\ \hline
w_{1m}L_m^t
\end{array}
\right)
(w_{21}L_1^t,\dots, w_{m1}L_1^t)
$$
where $L=\left(
\begin{array}{c|c|c}
L_1^t  & \cdots & 0 \\ \hline
0 &L_1^t &  \cdots  \\ \hline
\vdots &  \ddots & \vdots \\ \hline
0 &\cdots & L_1^t
\end{array}
\right)
$
and
$\tilde{W}_{11}^{(2)}=\left(
\begin{array}{c|c|c}
w_{22}L_2^t     & \cdots & w_{m2}L_2^t \\ \hline
w_{23}L_3^t &  \cdots & w_{m2}L_3^t \\ \hline
\vdots &  \ddots & \vdots \\ \hline
w_{2m}L_m^t  & \cdots & w_{mm}L_m^t
\end{array}
\right).$
Once the $Q_i's$ are obtained we use
$$
p_1=(w_{21}L_1,\dots, w_{m1}L_1)\left(
\begin{array}{c}
Q_2\\ \hline
Q_3\\ \hline
\vdots  \\ \hline
Q_m
\end{array}
\right)
$$
to get $p_1$. The remaining $p_i's$ are analogously obtained.

\noindent\underbar{Block matrix of type $\B_3$}: The obtention of the Perron vector $\pi$ of $\B_3$  follows from combining the $p_i's$ with the coupling factor, which is the Perron vector of $W^t$.

\[
\B_3=
\left(
\begin{array}{c|c|c|c}
w_{11}L_1^t    & w_{21}Id & \cdots & w_{m1}Id \\ \hline
w_{12}Id & w_{22}L_2^t    & \cdots & w_{m2}Id \\ \hline
\vdots & \vdots & \ddots & \vdots \\ \hline
w_{1m}Id & w_{2m}Id & \cdots & w_{mm}L_m^t
\end{array}
\right)\in \mathbb{R}^{nm\times nm}.
\]

In this case the Perron vector $p_1$ of the Perron complement $P_{11}$  satisfies
$$
p_1=w_{11}L_1^t p_1+(w_{21}Id\dots w_{m1}Id)
\left(Id-\left(
\begin{array}{c|c|c|c}
w_{22}L_2^t & w_{32}Id & \cdots & w_{m2}Id \\ \hline
w_{23}Id & \cdots& \cdots & w_{m2}L_m^t \\ \hline
\vdots &  \vdots & \ddots	 &\vdots \\ \hline
w_{2m}Id  & w_{3m}Id& \cdots & w_{mm}L_m^t
\end{array}
\right)\right)^{-1}
\left(
\begin{array}{c}
w_{12}Id \\ \hline
w_{13}Id\\ \hline
\vdots  \\ \hline
w_{1m}Id
\end{array}
\right)p_1
$$
Then, multiplying in both sides by
$
\left(
\begin{array}{c}
w_{12}Id \\ \hline
w_{13}Id\\ \hline
\vdots  \\ \hline
w_{1m}Id
\end{array}
\right)
$
and using the change of variables

$$
\left(
\begin{array}{c}
Q_2 \\ \hline
Q_3\\ \hline
\vdots  \\ \hline
Q_m
\end{array}
\right)=\left(Id-\left(
\begin{array}{c|c|c|c}
w_{22}L_2^t & w_{32}Id & \cdots & w_{m2}Id \\ \hline
w_{23}Id & \cdots& \cdots & w_{m2}L_m^t \\ \hline
\vdots &  \vdots & \ddots	 &\vdots \\ \hline
w_{2m}Id  & w_{3m}Id& \cdots & w_{mm}L_m^t
\end{array}
\right)\right)^{(-1)}
\left(
\begin{array}{c}
w_{12}Id \\ \hline
w_{13}Id\\ \hline
\vdots  \\ \hline
w_{1m}Id
\end{array}
\right)p_1
$$ so

$\left(
\begin{array}{c}
w_{12}Id \\ \hline
w_{13}Id\\ \hline
\vdots  \\ \hline
w_{1m}Id
\end{array}
\right)p_1
=w_{11}\left(\begin{array}{c}
w_{12}Id \\ \hline
w_{13}Id\\ \hline
\vdots  \\ \hline
w_{1m}Id
\end{array}
\right)
L_1^tp_1+
\left(
\begin{array}{c}
w_{12}Id \\ \hline
w_{13}Id\\ \hline
\vdots  \\ \hline
w_{1m}Id
\end{array}
\right)(w_{21}Id\dots w_{m1}Id)\left(
\begin{array}{c}
Q_2\\ \hline
Q_3\\ \hline
\vdots  \\ \hline
Q_m
\end{array}
\right)
$
or
$\left(
\begin{array}{c}
w_{12}Id \\ \hline
w_{13}Id\\ \hline
\vdots  \\ \hline
w_{1m}Id
\end{array}
\right)p_1
=w_{11}
\left(Id-
\begin{array}{c|c|c|c}
L_1^t  &0   & \cdots & 0 \\ \hline
0 &  L_1^t& \cdots & 0 \\ \hline
\vdots & \vdots & \ddots & \vdots \\ \hline
0 & 0 & \cdots & L_1^t
\end{array}
\right)
\left(\begin{array}{c}
w_{12}Id \\ \hline
w_{13}Id\\ \hline
\vdots  \\ \hline
w_{1m}Id
\end{array}
\right)
L_1^tp_1+
\left(
\begin{array}{c}
w_{12}Id \\ \hline
w_{13}Id\\ \hline
\vdots  \\ \hline
w_{1m}Id
\end{array}
\right)(w_{21}Id\dots w_{m1}Id)\left(
\begin{array}{c}
Q_2\\ \hline
Q_3\\ \hline
\vdots  \\ \hline
Q_m
\end{array}
\right).
$
This is equivalent, by the change of variables above,  
to
 $\left(
\begin{array}{c}
Q_2 \\ \hline
Q_3\\ \hline
\vdots  \\ \hline
Q_m
\end{array}
\right)$ being an eigenvector associated to 1 of the matrix
$${\mathcal A_3}^{p_1}=w_{11}L+\tilde{W}_{11}^{(3)}-w_{11}L\tilde{W}_{11}^{(3)} +\left(\begin{array}{c}
w_{12}Id \\ \hline
w_{13}Id\\ \hline
\vdots  \\ \hline
w_{1m}Id
\end{array}
\right)(w_{21}Id\dots w_{m1}Id)
$$ where
$L=\left(
\begin{array}{c|c|c}
L_1^t  & \cdots & 0 \\ \hline
0 &L_1^t &  \cdots  \\ \hline
\vdots &  \ddots & \vdots \\ \hline
0 &\cdots & L_1^t
\end{array}
\right)
$
and
$\tilde{W}_{11}^{(3)}=\left(
\begin{array}{c|c|c|c}
w_{22}L_2^t & w_{32}Id   & \cdots & w_{m2}Id \\ \hline
w_{23}Id &w_{33}L_3^t &  \cdots & w_{m2}Id \\ \hline
\vdots & \vdots & \ddots & \vdots \\ \hline
w_{2m}Id & w_{3m}Id & \cdots & w_{mm}L_m^t
\end{array}
\right)$

%
%

Once the $Q_i's$ are obtained we use the change of variables above to recover $p_1$ (since some of the $w_1i\ne0$):
$$
\left(
\begin{array}{c}
w_{12}Id \\ \hline
w_{13}Id\\ \hline
\vdots  \\ \hline
w_{1m}Id
\end{array}
\right)p_1=\left(Id-\left(
\begin{array}{c|c|c|c}
w_{22}L_2^t & w_{32}Id & \cdots & w_{m2}Id \\ \hline
w_{23}Id & \cdots& \cdots & w_{m2}L_m^t \\ \hline
\vdots &  \vdots & \ddots	 &\vdots \\ \hline
w_{2m}Id  & w_{3m}Id& \cdots & w_{mm}L_m^t
\end{array}
\right)\right)\left(
\begin{array}{c}
Q_2 \\ \hline
Q_3\\ \hline
\vdots  \\ \hline
Q_m
\end{array}
\right).
$$
The remaining $p_i's$ are analogously calculated.

%
%

\bigskip

\bigskip
\subsection{Particular case of two layers ($m=2$)}  

We will show that the eigenvectors associated to the principal eigenvalue 1 can be computed in terms of the eigenvectors associated to 1 of certain matrices related to $L_1^t$, $L_2^t$ and the elements of $W$. Instead of using the techniques of  \cite{Meyer89} we will do all the calculations directly. Moreover, we will deal with all possible cases of $W$ under the only hypothesis that this matrix is row-stochastic.

\noindent\underbar{Block matrix of type $\B_1$, $m=2$}:

\[
\B_1=
\left(
\begin{array}{c|c}
w_{11}L_1^t    & w_{21}L_2^t  \\ \hline
w_{12}L_1^t & w_{22}L_2^t    \\ 
\end{array}
\right),
\]
where $L_\ell^t$ is the transpose of the row normalization of the adjacency matrix of layer $S_\ell$.

\noindent (a) If both $w_{11}\ne 1$ and $w_{22}\ne 1$ then if $\left(
\begin{array}{c}
\pi_1 \\ \hline
\pi_2
\end{array}
\right)$  is an eigenvector associated to the eigenvalue 1,
we have
$$\left\{
    \begin{array}{ll}
      \pi_1=w_{11}L_1^t \pi_1+w_{21}L_2^t \pi_2,  \\
      \pi_2=w_{12}L_1^t \pi_1+ w_{22}L_2^t \pi_2.
    \end{array}
  \right.
$$ From here,
taking into account that both $(I-w_{11}L_1^t)$ and $(I-w_{22}L_2^t)$ are invertible matrices, we get that
$\pi_1=w_{21}(I-w_{11}L_1^t)^{-1}L_2^t \pi_2$, and  $\pi_2=w_{12}(I-w_{22}L_2^t)^1 L_1^t \pi_1.$ Substituting in the above equations we get
$$\left\{
    \begin{array}{ll}
      \pi_1=(w_{11}L_1^t+w_{12}w_{21}L_2^t(I-w_{22}L_2^t)^{-1}L_1^t) \pi_1,  \\
      \pi_2=(w_{22}L_2^t+w_{12}w_{21}L_1^t(I-w_{11}L_1^t)^{-1}L_2^t)\pi_2.
    \end{array}
  \right.
$$  Now multiplying the first equation by the matrix $(I-w_{22}L_2^t)$ on the left, and the second equation by the matrix $(I-w_{11}L_1^t)$ on the left we get
$$\left\{
    \begin{array}{ll}
      \pi_1=(w_{11}L_1^t+ w_{22}L_2^t+(1-w_{11}-w_{22})L_2^tL_1^t)\pi_1,  \\
      \pi_2=(w_{11}L_1^t+ w_{22}L_2^t+(1-w_{11}-w_{22})L_1^tL_2^t)\pi_2,
    \end{array}
  \right.
$$
i.e.,
$\pi_1$ and $\pi_2$ are eigenvectors associated to 1 to the column stochastic matrices
$$\begin{array}{ll}
      {\mathcal A}_1^{\pi_1}=(w_{11}L_1^t+ w_{22}L_2^t+(1-w_{11}-w_{22})L_2^tL_1^t), \hbox{ and} \\
      {\mathcal A}_1^{\pi_2}=(w_{11}L_1^t+ w_{22}L_2^t+(1-w_{11}-w_{22})L_1^tL_2^t).
    \end{array}
$$

\noindent (b) If $w_{11}=1$ then $w_{12}=0$,
in which case $\B_1$ is of the form,
\[
\B_1=
\left(
\begin{array}{c|c}
L_1^t    & w_{21}L_2^t  \\ \hline
0 & w_{22}L_2^t    \\ 
\end{array}
\right),
\]
and if the vector
$\left(
\begin{array}{c}
\pi_1 \\ \hline
\pi_2
\end{array}
\right)$ is associated to the eigenvalue 1 then
$$\left\{
    \begin{array}{ll}
      \pi_1=L_1^t \pi_1+w_{21}L_2^t \pi_2,  \\
      \pi_2=w_{22}L_2^t \pi_2.
    \end{array}
  \right.
$$
We have one of the three following situations:

\noindent (b.1) $0<w_{22}<1$: in this case $\pi_2=0$ since $L_2^t$ is column stochastic and cannot have nonzero eigenvectors with associated to an eigenvalue $1/w_{22}>1$. Therefore
the eigenvectors associated to 1 of $\B_1$ have the form
$\left(
\begin{array}{c}
\pi_1 \\ \hline
0
\end{array}
\right)$
where $\pi_1$ is an eigenvector of $L_1^t$ associated to 1.

\noindent (b.2) $w_{22}=0$: in this case $w_{21}=1$ and we have that
the eigenvectors associated to $\B_1$ have the form
$\left(
\begin{array}{c}
\pi_1 \\ \hline
0
\end{array}
\right)$
where $\pi_1$ is an eigenvector of $L_1^t$ associated to 1.

\noindent (b.3) $w_{22}=1$:
in this case $W$ is the identity (there is no influence of a layer into another layer) and
the eigenvectors of $\B_1$ associated to 1 have the form  $\left(
\begin{array}{c}
\pi_1 \\ \hline
\pi_2
\end{array}
\right)$ where $\pi_1$ is an eigenvector of $L_1^t$ associated to 1 and $\pi_2$ an eigenvector of $L_2^t$ associated to 1.

\noindent (c) If $w_{22}=1$ then, arguing as in case (b) either $w_{11}=1$ and we are again in the situation of (b.3) or the  eigenvector of $\B_1$ associated to 1 are of the form $\left(
\begin{array}{c}
0    \\ \hline
\pi_2
\end{array}
\right)$ where $\pi_2$ is an eigenvector of $L_2^t$ associated to 1.

\noindent\underbar{Block matrix of type $\B_2$, $m=2$}:

\[
\B_2=
\left(
\begin{array}{c|c}
w_{11}L_1^t    & w_{21}L_1^t  \\ \hline
w_{12}L_2^t & w_{22}L_2^t    \\ 
\end{array}
\right),
\]
where $L_\ell^t$ is the transpose of the row normalization of the adjacency matrix of layer $S_\ell$.

\noindent (a) If both $w_{11}\ne 1$ and $w_{22}\ne 1$ then if $\left(
\begin{array}{c}
\pi_1 \\ \hline
\pi_2
\end{array}
\right)$  is an eigenvector associated to the eigenvalue 1,
we have
$$\left\{
    \begin{array}{ll}
      \pi_1=w_{11}L_1^t \pi_1+w_{21}L_1^t \pi_2,  \\
      \pi_2=w_{12}L_2^t \pi_1+ w_{22}L_2^t \pi_2.
    \end{array}
  \right.
$$ From here,
taking into account that both $(I-w_{11}L_1^t)$ and $(I-w_{22}L_2^t)$ are invertible matrices,
we get that $\pi_1=w_{21}(I-w_{11}L_1^t)^{-1}L_1^t \pi_2$, and  $\pi_2=w_{12}(I-w_{22}L_2^t)^{-1} L_2^t \pi_1.$ Substituting in the above equations we get
$$\left\{
    \begin{array}{ll}
     (I-w_{11}L_1^t) \pi_1=w_{12}w_{21}L_1^t(I-w_{22}L_2^t)^{-1}L_2^t \pi_1,  \\
     (I-w_{22}L_2^t) \pi_2=w_{12}w_{21}L_2^t(I-w_{11}L_1^t)^{-1}L_1^t \pi_2,
    \end{array}
  \right.
$$
so using that $L_1^t$ and $(I-w_{11}L_1^t)^{-1}$ commute and $L_2^t$ and $(I-w_{22}L_2^t)^{-1}$ commute we have
$$\left\{
    \begin{array}{ll}
     \pi_1=w_{12}w_{21}L_1^t(I-w_{11}L_1^t)^{-1} (I-w_{22}L_2^t)^{-1}L_2^t \pi_1,  \\
     \pi_2=w_{12}w_{21}L_2^t(I-w_{22}L_2^t)^{-1}(I-w_{11}L_1^t)^{-1}L_1^t\pi_2.
    \end{array}
  \right. \eqno{(1)}
$$
Let us define $\pi_1^{aux}=(I-w_{11}L_1^t)^{-1} (I-w_{22}L_2^t)^{-1}L_2^t \pi_1$ and $\pi_2^{aux}=(I-w_{22}L_2^t)^{-1}(I-w_{11}L_1^t)^{-1}L_1^t\pi_2$.
. By the   equations (1)
$$\left\{
    \begin{array}{ll}
     \pi_1=w_{12}w_{21}L_1^t \pi_1^{aux},  \\
     \pi_2=w_{12}w_{21}L_2^t \pi_2^{aux},
    \end{array}
  \right.\eqno{(2)}
$$
and from (1) and (2)
$$\left\{
    \begin{array}{ll}
     (I-w_{22}L_2^t)(I-w_{11}L_1^t) \pi_1^{aux}=L_2^t \pi_1= L_2^tw_{12}w_{21}L_1^t \pi_1^{aux},  \\
     (I-w_{22}L_2^t)(I-w_{11}L_1^t) \pi_2^{aux}=L_1^t \pi_2= L_1^t w_{12}w_{21}L_2^t \pi_2^{aux},
     \end{array}
  \right.
$$
i.e.,
$\pi_1^{aux}$ and $\pi_2^{aux}$ are eigenvectors associated to 1 of the column stochastic matrices
$$\begin{array}{ll}
      {\mathcal A}_2^{\pi_1^{aux}}=(w_{11}L_1^t+w_{22}L_2^t-w_{11}w_{22}L_2^tL_1^t+w_{12}w_{21}L_2^tL_1^t), \hbox{ and} \\
      {\mathcal A}_2^{\pi_2^{aux}}=(w_{11}L_1^t+w_{22}L_2^t-w_{11}w_{22}L_1^tL_2^t+w_{12}w_{21}L_1^tL_2^t).
    \end{array}
$$ After computing $\pi_1^{aux}$ and $\pi_2^{aux}$,
$$\left\{
    \begin{array}{ll}
     \pi_1=w_{12}w_{21}L_1^t \pi_1^{aux},  \\
     \pi_2=w_{12}w_{21}L_2^t \pi_2^{aux}.
    \end{array}
  \right.
$$

\noindent (b) ($w_{11}=1$) and (c)  ($w_{22}=1$) give the same results as for matrices of type $\B_1$.

\noindent\underbar{Block matrix of type $\B_3$, $m=2$}:

\[
\B_2=
\left(
\begin{array}{c|c}
w_{11}L_1^t    & w_{21}I_2  \\ \hline
w_{12}I_2 & w_{22}L_2^t    \\ 
\end{array}
\right),
\]
where $L_\ell^t$ is the transpose of the row normalization of the adjacency matrix of layer $S_\ell$.

\noindent (a) If both $w_{11}\ne 1$ and $w_{22}\ne 1$ then if $\left(
\begin{array}{c}
\pi_1 \\ \hline
\pi_2
\end{array}
\right)$  is an eigenvector associated to the eigenvalue 1,
we have
$$\left\{
    \begin{array}{ll}
      \pi_1=w_{11}L_1^t \pi_1+w_{21} \pi_2,  \\
      \pi_2=w_{12} \pi_1+ w_{22}L_2^t \pi_2.
    \end{array}
  \right.
$$
Taking into account that both $(I-w_{11}L_1^t)$ and $(I-w_{22}L_2^t)$ are invertible matrices,
we get that
$\pi_1=w_{21}(I-w_{11}L_1^t)^{-1} \pi_2$, and  $\pi_2=w_{12}(I-w_{22}L_2^t)^{-1}  \pi_1.$ Substituting in the above equations we get
$$\left\{
    \begin{array}{ll}
     (I-w_{11}L_1^t) \pi_1=w_{12}w_{21}(I-w_{22}L_2^t)^{-1} \pi_1,  \\
     (I-w_{22}L_2^t) \pi_2=w_{12}w_{21}(I-w_{11}L_1^t)^{-1} \pi_2,
    \end{array}
  \right.
$$
so multiplying in both sides by $(I-w_{11}L_1^t)$ and $(I-w_{22}L_2^t)$ respectively  we have
$$\left\{
    \begin{array}{ll}
     w_{12}w_{21}\pi_1=(I-w_{22}L_2^t)(I-w_{11}L_1^t)\pi_1,  \\
     w_{12}w_{21}\pi_2=(I-w_{11}L_1^t)(I-w_{22}L_2^t)\pi_2.
    \end{array}
  \right.
$$
Therefore,
 $\pi_1$ and $\pi_2$ are eigenvectors associated to 1 to the column stochastic matrices
$$
\begin{array}{ll}
      {\mathcal A}_2^{\pi_1}=(w_{11}L_1^t+w_{22}L_2^t-w_{11}w_{22}L_2^tL_1^t+ w_{12}w_{21}I_2), \hbox{ and} \\
      {\mathcal A}_2^{\pi_2}=(w_{11}L_1^t+w_{22}L_2^t-w_{11}w_{22}L_1^tL_2^t+w_{12}w_{21}I_2).
    \end{array}
$$

\noindent (b) ($w_{11}=1$) and (c)  ($w_{22}=1$) give the same results as for matrices of type $\B_1$.


\end{document}